\newcommand{\bzero}{\mbox{\boldmath{$0$}}}
\newcommand{\bt}{\mbox{\boldmath{$t$}}}
\newcommand{\bU}{\mbox{\boldmath{$U$}}}
\newcommand{\bu}{\mbox{\boldmath{$u$}}}
\newcommand{\bx}{\mbox{\boldmath{$x$}}}
\newcommand{\by}{\mbox{\boldmath{$y$}}}
\newcommand{\bphi}{\mbox{\boldmath{$\phi$}}}
\newcommand{\bomega}{\mbox{\boldmath{$\omega$}}}
\newcommand{\bzeta}{\mbox{\boldmath{$\zeta$}}}
\newcommand{\bnu}{\mbox{\boldmath{$\nu$}}}
\newtheorem{theorem}{Theorem}[section]
\newtheorem{lemma}[theorem]{Lemma}
\begin{document}

\title{A Coordinate-Descent Framework to Design Low PSL/ISL Sequences}

\author{M. Alaee, \thanks{M. Alaee, {M.} {M.} Naghsh and {M.} Modarres-Hashemi are  with the Department of Electrical and Computer Engineering, Isfahan University of Technology, Isfahan 84156-83111, Iran. Email: m.alaee@ec.iut.ac.ir, mm\_naghsh@cc.iut.ac.ir, modarres@cc.iut.ac.ir}\emph{Student Member, IEEE}, A. Aubry, \emph{Senior Member, IEEE}, A. De Maio\thanks{
A. Aubry and A. De Maio are with
the Universit\`a
degli Studi di Napoli ``Federico II'', Dipartimento di Ingegneria
Elettrica e delle Tecnologie dell'Informazione, Via Claudio 21,
I-80125 Napoli, Italy. E-mail: augusto.aubry@unina.it, ademaio@unina.it.}, \emph{Fellow, IEEE}, {M.} {M.} Naghsh, \emph{Member, IEEE}, M. Modarres-Hashemi}

\maketitle

\begin{abstract}
 This paper is focused on the design of phase sequences with good (aperiodic) autocorrelation properties in terms of Peak Sidelobe Level (PSL) and Integrated Sidelobe Level (ISL). The  problem is formulated as a bi-objective Pareto optimization forcing either a continuous or a discrete phase constraint at the design stage. An iterative procedure based on the coordinate descent method is introduced to deal with the resulting optimization problems which are non-convex and NP-hard in general.
Each iteration of the devised method requires the solution of a non-convex min-max problem. It is handled either through a novel bisection or an FFT-based method for the continuous and the discrete phase constraint, respectively. Additionally, a heuristic approach to initialize the procedures employing the $l_p$-norm minimization technique is  proposed. Simulation results illustrate that the proposed methodologies can outperform some counterparts providing sequences with good autocorrelation features especially in the discrete phase/binary case.
\end{abstract}
\begin{keywords}
Radar, Waveform Design, Peak Sidelobe Level (PSL), Integrated Sidelobe
Level (ISL), Polyphase Codes, Binary Phase Codes.
\end{keywords}

\section{Introduction}
Waveform design has received considerable attention during the last eight decades in many communication, active sensing, and electronic warfare systems \cite{woodward1953probability}.
In communication systems (e.g., code-division multiple access), low correlation sidelobes are desired for synchronization and reduction of multi-access interferences \cite{9781139095174}. In radar range compression, low Peak Sidelobe Level  (PSL) waveforms are employed to avoid masking of weak targets in the range sidelobes of a strong return \cite{barker1953group,68151}. Besides, to mitigate the deleterious effects of distributed clutter returns close to the target of interest \cite{4517015}, signals with low Integrated Sidelobe Level (ISL) are exploited. Remarkably, coded waveforms and range compression grant enhanced electronic protection against barrage jamming 
 as well as  increased range resolution which is critical in electronic warfare, e.g., to prevent  deceptive attacks based on range-gate pull-off  \cite{jammingAntonio}. Also, pulse to pulse changing the transmitted waveform totally counters range-gate pull-in attacks \cite{jammingAntonio}. Generally speaking, according to the hardware technology involved in the waveform generation process, the signals employed in real systems can be classified in two types: analog and digital. The former class exploits analog circuits/devices and typical examples are the Linear Frequency Modulated (LFM) and the non-linear frequency modulated signals; the latter is based on arbitrary digital waveform generators and relevant instances are the phase coded waveforms involving Barker, Frank, and Golomb, sequences, just to list a few \cite{levanon2004radar}. Both conventional phase coded and LFM waveforms are ubiquitous in practical systems mainly due to their easy generation as well as the possible tolerance to Doppler-shifts.
 Nevertheless, the static use of a fixed waveform could determine a scarce adaptivity to the operating environment as well as  vulnerability to electronic attacks highlighting the need for multiple and diverse waveforms exhibiting specific features \cite{DeMaiobook}. As a result, several researchers have proposed a variety of approaches based on sophisticated optimization methods to design advanced polyphase sequences \cite{5072243,6601713,6650043,7362231}. This paper is framed in the mentioned context with the goal of designing constant-modulus sequences whose autocorrelation sidelobes share a desirable behavior.
\subsection{Background and Previous Works}
Binary phase coded waveforms are common in radar and communication systems being their implementation quite simple \cite{skolnik2008radar}. Important instances are the Barker codes  which are unfortunately limited  to length 13.  $M$-sequences, well-known for their ideal periodic autocorrelation function, can  be easily generated using linear feedback shift registers but have no constraints/guarantees on the sidelobes of their aperiodic autocorrelation function; hence, they are almost impractical in radar applications (similarly Gold codes and Kasami sequences)\footnote{Finding sequences with good aperiodic correlation properties is usually a harder task than searching for sequences
with good periodic correlation.}. Unlike the
case of periodic correlation, it is not possible to construct binary sequences
with an exact impulsive aperiodic autocorrelation. Therefore, a brute-force approach to obtain good sequences is to perform an exhaustive search, viable especially  when the alphabet size is small, i.e., binary case. Minimum Peak Sidelobe (MPS) sequences are the best binary codes in terms of PSL (known up to length 105) which are obtained via global search through some supercomputers; a summary of the best known binary sequences is presented in \cite{nasrabadi2010survey}. When the constellation size increases, it becomes difficult and difficult (almost impossible) to perform the exhaustive search.
In these situations derivation of analytical methods to design optimal or nearly-optimal sequences are valuable. To this end,
 an Iterative Twisted appROXimation (ITROX) method is proposed in \cite{6142119} to get discrete phase sub-optimal sequences with limited alphabet size. Heuristic techniques exploiting Simulated Annealing (SA), Threshold Accepting (TA), and Great Deluge Algorithm (GDA) have been developed to design generalized polyphase Barker sequences\footnote{A polyphase sequence is a generalized Barker code if the magnitude of all the autocorrelation sidelobes is less than or equal to one but for the last entry that is one \cite{508850}.} \cite{508850,1412048,1435817}. However, the maximum length of the obtained generalized Barker codes is limited to $77$ \cite{5089560}.\\
In \cite{4567663} a computationally efficient cyclic optimization algorithm for the
design of constant-modulus transmit signals with good auto- and cross-correlation features is developed with reference to Multi-Input-Multi-Output (MIMO) radars. Following a similar line of reasoning, in \cite{4749273} and \cite{5072243} cyclic algorithms for the  minimization of ISL-related metrics in Single-Input-Single-Output (SISO) systems are introduced, i.e., Cyclic Algorithm Pruned (CAP), Cyclic Algorithm New (CAN), and Weighted CAN (WeCAN). The main merit of these procedures is the reduced computational complexity that leads to a quite short execution time. From a theoretical point of view, they are based on the solution of an optimization problem that is \textit{almost equivalent} to the ISL minimization, being the objective function an approximation of the ISL.
Recently, a new optimization algorithm which can directly minimize the ISL metric under a constant modulus constraint\footnote{In \cite{6563125} a general framework (with theoretical ensured convergence properties) to optimize quartic order functions assuming phase-only sequences is presented. Interestingly, it can also be used to design optimized ISL sequences.} has been introduced in \cite{7362231}. It relies on the Majorization-Minimization (MM) framework and generally converges to a locally optimum ISL value. Two different versions are proposed exploiting two distinct majorization functions (Majorization-minimization Weighted ISL (MWISL) and MWISL-Diag). Their performance in terms of achieved ISL values is similar to that of CAN algorithm. Meanwhile, adopting a suitable modification in both MWISL and MWISL-Diag (as specified in \cite{7362231}), the aforementioned techniques can converge faster than CAN.\\
 Most of the above literature is focused on ISL-oriented design problems mainly due to the technical difficulties arising when the non-differentiable and highly non-convex PSL metric is considered as figure of merit. A first attempt to fill this gap and systematically synthesize polyphase codes with low PSL values is pursued in \cite{7362231}. Precisely, the authors provide an algorithm, based on MM paradigm, capable of minimizing the $l_p$-norm of the autocorrelation sidelobes. Hence, a heuristic method (called MM-PSL) exploiting the observation that the PSL coincides with the
  limit as $p$ goes to infinity of the mentioned $l_p$-norm\footnote{A similar approach has been also investigated in \cite{5765722} to devise optimized receive filters.} is proposed to optimize the PSL.
 \subsection{Contribution and Organization}
In this paper,  PSL and ISL, namely the two most important measures quantifying the quality of the autocorrelation function, are jointly considered to synthesize advanced constant modulus codes according to a Pareto optimization framework.
Specifically, the problem is formulated as a bi-objective optimization where either a continuous or a
discrete phase constraint is forced at the design stage. To tackle the resulting non-convex and, in general, NP-hard problems
an iterative procedure based on the Coordinate Descent (CD) method is introduced. Each iteration of the developed procedure requires the solution of a non-convex min-max problem involving quartic functions.
As to the continuous phase case, a novel polynomial-time bisection method aimed at solving globally the aforementioned  problem is developed. The discrete phase design, encompassing the challenging binary synthesis, is handled via an FFT-based procedure. Besides, a heuristic approach to initialize the procedures exploiting an $l_p$-norm minimization criterion is introduced.

Summarizing, the contributions of this paper are:
\begin{itemize}
\item the development of an efficient CD method that optimizes an objective function given by a weighted sum of the ISL and PSL based metrics. The method decreases the value of the objective  at each iteration and can ensure convergence to a stationary point provided that the Maximum Block Improvement (MBI) \cite{chen2012maximum,6563125} rule is adopted. Also, the complexity  per iteration is polynomial.  Remarkably, to the best of our knowledge, no mathematical algorithm with ensured convergence properties has been suggested in the literature for the exact PSL minimization.
\item the design of sequences with discrete phase possessing low ISL and PSL. In this respect, it is worth observing that there exist algorithms in open literature for ISL minimization. Nevertheless, they usually do not perform well in the discrete phase case \cite{4567663} and the proposed  algorithm is able to outperform them. As to the PSL minimization, to the best of our knowledge, systematic approaches are not available in the open literature  and our method fills this relevant gap.
\item the specialization of the proposed design methodology to the context of binary sequences with good ISL and PSL. Precisely, at each iteration of the devised method a weighted sum of the ISL
and the PSL based metrics of the starting binary code decreases until convergence.
\end{itemize}
The rest of this paper is organized as follows. Section \ref{system_model} deals with  problem formulation. In Section \ref{CD-based}, the CD-based solution method is introduced together with techniques aimed at solving the optimization problem involved in each iteration, for both the continuous and the discrete phase case. Besides, a heuristic method is discussed to initialize the new proposed algorithms.
 Numerical examples are provided in Section \ref{perf_analysis} to illustrate the effectiveness of the approach. Finally, concluding remarks  and possible future research tracks are given in Section \ref{conclusions}.

\subsection{Notation}
We adopt the notation of using bold lowercase letters for vectors and bold uppercase letters  for matrices.  The transpose, the conjugate, and
the conjugate transpose operators are denoted by the
symbols ${(\cdot)^T}$, $(\cdot)^*$ and $(\cdot)^H$ respectively; The $l_p$-norm of a vector $\bx$ is denoted by $\|\bx\|_p$. The letter $\jmath$ represents the imaginary unit (i.e., $\jmath = \sqrt{-1}$), while the letter $i$ often serves as index. For any complex number
$x$, we use $\Re(x)$ and $\Im(x)$ to denote the real part and the imaginary part of $x$, respectively. For any $x\in \mathbb{R}$, $\lceil x\rceil$ denotes the lowest integer higher than or equal to $x$. Also, $|x|$ and $\arg(x)$ represent the modulus and the argument of $x$, respectively. The abbreviation ``s.t.'' stands for ``subject to''.

\section{Problem Formulation}\label{system_model}
Let $\bx = [x_1, x_2, \ldots, x_N]^T \in {\mathbb{C}}^N$ be the transmitted fast-time radar code vector with $N$ the number of coded sub-pulses (code length). The autocorrelation function associated with $\bx$ is defined as
 \begin{equation}\label{rk1}
r_k = \sum_{i=1}^{N-k}{x_i^*x_{i+k}}, ~~~~ k = 0, \ldots , N-1,
\end{equation}
and represents the output of the matched filter to $\bx$ when $\bx$ is the input signal. The PSL and
ISL metrics\footnote{Notice that in some references  $\text{ISL} = 2 \sum_{k=1}^{N-1}|r_k|^2$.} are commonly used to design waveforms with ``\emph{good}'' autocorrelation properties \cite{levanon2004radar} and are
formally defined as
\begin{eqnarray}
\text{PSL} &=&  \max\{|r_k|\}_{k=1}^{k=N-1},\\ \label{psl1}
\text{ISL} &=&  \sum_{k=1}^{N-1}|r_k|^2,\label{isl1}
\end{eqnarray}
respectively. This paper is focused on the design of unimodular sequences considering simultaneously the PSL and the ISL as performance indices. From an analytical point of view the problem can be formulated as the following
constrained bi-objective optimizations,
\begin{flalign}\label{eq:multi1}
P^{\infty}
\begin{dcases}
\min_{\bx}& {f_1(\bx), f_2(\bx)}\\
s.t. & \bx \in \Omega_{\infty}
\end{dcases}
&,~~
P^{M}
\begin{dcases}
\min_{\bx}& {f_1(\bx), f_2(\bx)}\\
s.t. & \bx \in \Omega_{M}
\end{dcases}
\end{flalign}
where
\begin{equation*}
f_1{(\bx)} = \max\{|r_k|^2\}_{k=1}^{k=N-1}
\end{equation*}
and
\begin{equation*}
f_2{(\bx)} = \sum_{k=1}^{k=N-1}\{|r_k|^2\}.
\end{equation*}
Herein, the constraints $\bx \in \Omega_{\infty}$ and $\bx \in \Omega_M$ denote continuous alphabet\footnote{Continuous alphabet means that there is no constraint on the phase values which can get any arbitrary value within ${[-\pi, \pi]}$. In the case of discrete phase constraint, the feasible set is restricted to a finite number of equi-spaced points on the unit circle.} and finite alphabet codes, respectively. Precisely,
\begin{equation}
\Omega_{\infty} = \{\bx \in {\mathbb{C}}^{N} | ~|x_i| = 1, i = 1, \ldots, N\}
\end{equation}
and
\begin{equation}
 \Omega_M = \{\bx  |  x_i \in \Psi_M , i = 1, \ldots, N\},
 \end{equation}
 where $\Psi_M = \{1, \bar{\omega}, \ldots, \bar{\omega}^{M-1}\}$, $ \bar{\omega} = e^{\jmath \frac{2\pi}{M}}$ and $M$ is the size of discrete constellation alphabet.\\
In a multi-objective optimization framework, usually a feasible solution that minimizes all the objective functions simultaneously does not exist \cite{deb2001multi}.
Accordingly, the goal is to find the \textit{Pareto-optimal} solutions to \eqref{eq:multi1} which is in general a formidable task.
A viable means to obtain the above solutions is the \textit{scalarization technique}\footnote{Scalarizing a multi-objective problem involves the solution of conventional optimization problems whose objective function is a specific convex combination of the original figures of merits \cite{boyd2004convex}. One or more Pareto-optimal solutions  correspond  to each selected weight vector.} which exploits  as objective a specific weighted sum  between $f_1{(\bx)}$ and $f_2{(\bx)}$. Specifically, defining the function $f_{\theta}(\bx)$, parameterized in the Pareto weight  $\theta \in [0, 1]$,
\begin{align} \label{eq:fbx}
f_{\theta}(\bx) = & ~ {\theta f_1(\bx) + (1-\theta) f_2(\bx)} \nonumber \\
= & \displaystyle{\max_{k=1,\ldots, N-1}}\left[  \theta {|r_k|^2}+ ( 1- \theta ) \sum_{l=1}^{N-1}{|r_l|^2} \right]
\end{align}
scalarization leads to the design problems
 \begin{flalign}\label{eq:Pareto25}
P^{\infty,\theta}
\begin{dcases}
\min_{\bx}& f_{\theta}(\bx)\\
s.t. & \bx \in \Omega_{\infty}
\end{dcases}
&,
P^{M,\theta}
\begin{dcases}
\min_{\bx}& f_{\theta}(\bx)\\
s.t. & \bx \in \Omega_{M}
\end{dcases}
\end{flalign}
They reduce to pure ISL (PSL) minimization setting $\theta = 0$ ($\theta = 1$).
Moreover, for any $\theta$, an optimal solution to \eqref{eq:Pareto25} is a Pareto-optimal point to Problem \eqref{eq:multi1}
(see {\cite{de2011pareto,de2011design,boyd2004convex}} and references therein for details).
\section{CD Code Optimization}\label{CD-based}
This section introduces an iterative algorithm based on the CD minimization procedure \cite{wright2015coordinate} (also known as alternate optimization \cite{1232330}) to sequentially optimize our objective over one variable keeping fixed the others. Otherwise stated, according to the CD approach, the minimization of a multivariable function can be achieved minimizing it along one direction at a time, i.e., solving univariate optimization problems in a loop \cite{wright2015coordinate, Richtárik2014}. With reference to (\ref{eq:Pareto25}), at each iteration a specific code entry is selected as variable to optimize leading to the following problems at step $n+1$
 \begin{flalign}\nonumber
P^{\infty,\theta}_{d,\bx^{(n)}}
\begin{dcases}
\min_{x_d}&f_{\theta}(x_d;\bx_{-d}^{(n)})\\
s.t. & |x_d|  = 1
\end{dcases}
&,
P^{M,\theta}_{d,\bx^{(n)}}
\begin{dcases}
\min_{x_d}& f_{\theta}(x_d;\bx_{-d}^{(n)}) \\
s.t. & x_d \in \Omega_{M}
\end{dcases}
\end{flalign}
 where $x_d$ is the variable to optimize, ${\bx_{-d}^{(n)} = [x_1^{(n)},\ldots,x_{d-1}^{(n)},x_{d+1}^{(n)},\ldots,x_{N}^{(n)}]^T \in {\mathbb{C}}^{N-1}}$ refers to the remaining code entries, and
 \begin{equation*}
f_{\theta}(x_d;\bx_{-d}^{(n)}) = f_{\theta}(x_1^{(n)},\ldots,x_{d-1}^{(n)},x_d,x_{d+1}^{(n)},\ldots,x_{N}^{(n)}).
\end{equation*}
 Thus, denoting by $x_{d,n+1}^{\star}$ the optimal solution to either $P^{\infty,\theta}_{d,\bx^{(n)}}$ or $P^{M,\theta}_{d,\bx^{(n)}}$, the optimized radar code at step $n+1$ is $\bx^{(n+1)}=[x_1^{(n)},\ldots,x_{d-1}^{(n)},x_{d,n+1}^{\star},x_{d+1}^{(n)},\ldots,x_{N}^{(n)}]^T$.  As a result, starting from an initial code $\bx^{(0)}$ a sequence $\bx^{(1)},\bx^{(2)},\bx^{(3)},\ldots$  of radar codes are obtained iteratively. A summary of the proposed approach can be found in {\bf Algorithm \ref{alg_seq}}.\\
\begin{algorithm}
\caption{Continuous (Discrete) Phase Code Design with Good Autocorrelation Features}
\label{alg_seq}
\textbf{Input:} Initial code $\bx_0 \in \Omega_{\infty}$ ($\bx_0 \in \Omega_{M}$), $\theta \in [0, 1]$, and minimum required improvement $\epsilon$; \\
\textbf{Output:} Optimal solution $\bx^{\star}$;
\begin{enumerate}
\item {\bf Initialization}.
\begin{itemize}
\item Compute the initial objective value ${f_{\theta}(x^{(0)}_1,x^{(0)}_2,\ldots,x^{(0)}_N)}$ using equation \eqref{eq:fbx};
\item  Set $d := 1$ and $n := 0$;
\end{itemize}
 \item {\bf Improvement}.
\begin{itemize}
\item Solve $P^{\infty,\theta}_{d,\bx^{(n)}}$ ($P^{M,\theta}_{d,\bx^{(n)}}$) obtaining $x_d^{\star}$;
\item Set ${n := n+1}$ and \\${\bx^{(n)}\! =\! \left[x_1^{(n-1)},\ldots,x_{d-1}^{(n-1)}\!,\!x_d^{\star},x_{d+1}^{(n-1)}\!,\ldots,x_{N}^{(n-1)}\!\right]^T}\!$;
\end{itemize}
\item {\bf Stopping Criterion}.
\begin{itemize}
\item {If} $|f_{\theta}(\bx^{(n)}) - f_{\theta}(\bx^{(n-1)})| \ < \epsilon$, stop. Otherwise, update $d$, i.e.,  if $d < N$ $d=d+1$, otherwise $d = 1$, and go to the step 2;
\end{itemize}
\item {\bf Output}.
\begin{itemize}
\item Set $\bx^{\star} = \bx^{(n)}$.
\end{itemize}
\end{enumerate}
\end{algorithm}
Notice that, the monotonic property of the CD technique along with the fact that the objective function is bounded (from below) are sufficient to prove the convergence of the sequence of objective values. It is also worth pointing out that the Maximum Block Improvement (MBI) updating rule\footnote{The MBI method is an iterative algorithm known to achieve excellent performance
in the maximization of real polynomial functions subject to spherical constraints \cite{6563125}. It is proved that any cluster point of the sequence produced by the MBI method is a stationary point for the considered optimization problem\cite{chen2012maximum}.} \cite{chen2012maximum} can be used in place of the  cyclic one (actually involved in {\bf Algorithm 1}) to ensure the convergence of the algorithm to a stationary point. In practice, a final optimized code can be obtained refining the solution provided {\bf Algorithm 1} through the MBI-modification.\\
To proceed further, let us make explicit the functional dependence of the objective function in  $P^{\infty,\theta}_{d,\bx^{(n)}}$  ($P^{M,\theta}_{d,\bx^{(n)}}$), i. e., $f_{\theta}(x_d;\bx_{-d}^{(n)})$ over the optimization variable $x_d$, i.e.,
\begin{equation}
\begin{aligned}
r_{k}(x_d)  = & x_d x_{d+k}^{*}\mathbf{1}_A(d+k) + x_{d-k}x_d^*\mathbf{1}_A(d-k) \\
& + \sum_{i=1 , i \neq\{ d, d-k\}}^{N-k}{x_i x_{i+k}^*}, ~ k = 1 , \ldots , N-1,
\end{aligned}
\end{equation}
where $\mathbf{1}_A(.)$ denotes the
indicator function of the set $A = \{1,2,\ldots,N\}$, i.e., $\mathbf{1}_A(x) = 1$ if $x \in A$, otherwise $\mathbf{1}_A(x) = 0$.
Defining, $a_{dk} \triangleq x_{d+k}^{*}\mathbf{1}_A(d+k)$, $b_{dk} \triangleq x_{d-k}\mathbf{1}_A(d-k) $ and $c_{dk}  \triangleq \sum_{i=1 , i \neq \{d , d - k\}}^{N-k}{x_i x_{i+k}^*}$, the autocorrelation function with the explicit $x_d$-dependence can be written as
\begin{equation}
r_{k}(x_d) = a_{dk}x_d +b_{dk}x_d^*+ c_{dk}  , ~~~ k = 1 , \ldots , N-1.
\end{equation}
Thus, the optimization problems $P^{\infty,\theta}_{d,\bx^{(n)}}$  and $P^{M,\theta}_{d,\bx^{(n)}}$  can be recast as,
\begin{flalign*}
P^{\infty,\theta}_{d,\bx^{(n)}}
\begin{dcases}
\min_{x_d}\displaystyle{\max_{k=1,\ldots, N-1}}&\!\!\!\!\left[  \theta { \left|{r_k(x_d)} \right|^2}+ ( 1- \theta ) \!\!\sum_{l=1}^{N-1}{ \left|{r_l(x_d)} \right|^2} \right]\\
s.t. & |x_d|  = 1
\end{dcases}
\end{flalign*}
\begin{flalign*}
P^{M,\theta}_{d,\bx^{(n)}}
\begin{dcases}
\min_{x_d}\displaystyle{\max_{k=1,\ldots, N-1}}&\!\!\!\!\left[  \theta { \left|{r_k(x_d)} \right|^2}+ ( 1- \theta )\!\! \sum_{l=1}^{N-1}{ \left|{r_l(x_d)} \right|^2} \right]\\
s.t. & x_d \in \left\{1, e^{\jmath\frac{2\pi}{M}},\ldots,e^{\jmath\frac{2\pi(M-1)}{M}}\right\}
\end{dcases}
\end{flalign*}
which are non-convex, constrained, min-max problems with non-homogeneous quadratic objectives of a complex variable.
In the next subsections, efficient algorithms to tackle  $P^{\infty,\theta}_{d,\bx^{(n)}}$ and $P^{M,\theta}_{d,\bx^{(n)}}$ are derived. This issue represents the main technical innovation of this paper from an optimization theory point of view.
\subsection{Continuous Phase Code Design}
This subsection is focused on the solution of Problem $P^{\infty,\theta}_{d,\bx^{(n)}}$. As first step toward this goal, it is shown that the square modulus of the autocorrelation function at each lag (as a function of $x_d= e^{\jmath \phi_d}, \phi_d \in [0, 2\pi]$) can be expressed as the the ratio of two quartic functions of a real variable. This result is given in terms of the following
\begin{lemma}\label{lemm_poly1} Performing the change of variable $\beta_d \triangleq \tan\left({\frac{\phi_d}{2}}\right)$,
\begin{equation} \label{eq:rk2}
\begin{aligned}
|\widetilde{r}_{k}(\beta_d)|^2 & =  |r_{k}(e^{\jmath\phi_{d}})|^2 \\
& = \frac{\mu_{dk} \beta_d^4 + \kappa_{dk} \beta_d^3 + \xi_{dk} \beta_d^2 + \eta_{dk} \beta_d + \rho_{dk}}{(1+\beta_d^2)^2},
\end{aligned}
\end{equation}
where $\mu_{dk}, \kappa_{dk} , \xi_{dk}, \eta_{dk}, \rho_{dk}$ are real-valued coefficients depending on $a_{dk}$, $b_{dk}$ and $c_{dk}$ as specified in Appendix \ref{ap:rk2}.\\
\end{lemma}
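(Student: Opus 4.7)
The plan is to compute $|r_k(e^{\jmath\phi_d})|^2$ directly from the expression $r_k(x_d)=a_{dk}x_d+b_{dk}x_d^*+c_{dk}$, expand the resulting trigonometric sum, and then apply the Weierstrass (tangent-half-angle) substitution $\beta_d=\tan(\phi_d/2)$ to bring everything over the common denominator $(1+\beta_d^2)^2$.

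First I would set $x_d=e^{\jmath\phi_d}$ and expand
\begin{equation*}
|r_k(e^{\jmath\phi_d})|^2=\bigl(a_{dk}e^{\jmath\phi_d}+b_{dk}e^{-\jmath\phi_d}+c_{dk}\bigr)\bigl(a_{dk}^{*}e^{-\jmath\phi_d}+b_{dk}^{*}e^{\jmath\phi_d}+c_{dk}^{*}\bigr),
\end{equation*}
and collect terms by frequency. The constant part gives $|a_{dk}|^2+|b_{dk}|^2+|c_{dk}|^2$, the $\pm 2\phi_d$ part gives $2\Re\!\bigl[a_{dk}b_{dk}^{*}e^{\jmath 2\phi_d}\bigr]$, and the $\pm\phi_d$ part gives $2\Re\!\bigl[(a_{dk}c_{dk}^{*}+b_{dk}^{*}c_{dk})e^{\jmath\phi_d}\bigr]$. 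Writing each $\Re[\cdot]$ as a real linear combination of $\cos\phi_d,\sin\phi_d,\cos 2\phi_d,\sin 2\phi_d$, one obtains $|r_k(e^{\jmath\phi_d})|^2$ as an explicit real trigonometric polynomial of degree two in $\phi_d$ with real coefficients expressed through $a_{dk},b_{dk},c_{dk}$.

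Next I would apply the Weierstrass substitution, using the standard identities
\begin{equation*}
\cos\phi_d=\frac{1-\beta_d^{2}}{1+\beta_d^{2}},\qquad \sin\phi_d=\frac{2\beta_d}{1+\beta_d^{2}},
\end{equation*}
\begin{equation*}
\cos 2\phi_d=\frac{1-6\beta_d^{2}+\beta_d^{4}}{(1+\beta_d^{2})^{2}},\qquad \sin 2\phi_d=\frac{4\beta_d(1-\beta_d^{2})}{(1+\beta_d^{2})^{2}}.
\end{equation*}
Substituting these into the trigonometric expression and bringing the result over the common denominator $(1+\beta_d^{2})^{2}$ immediately yields a numerator which is a polynomial in $\beta_d$ of degree at most $4$. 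Since every coefficient in the trigonometric form is real, the resulting polynomial has real coefficients $\mu_{dk},\kappa_{dk},\xi_{dk},\eta_{dk},\rho_{dk}$, and matching powers of $\beta_d$ gives their closed-form expressions in terms of $a_{dk},b_{dk},c_{dk}$.

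There is no real obstacle here beyond careful bookkeeping: the hard part is simply to identify each of the five coefficients correctly (in particular making sure the constant term of the numerator absorbs the constants coming from $\cos 0=1$ in the tangent-half-angle identities and that the degree-$4$ term is consistent across the three trigonometric contributions). I would defer the explicit formulas to Appendix \ref{ap:rk2} as the statement already indicates.
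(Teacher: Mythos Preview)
Your proposal is correct and follows essentially the same route as the paper: expand $|r_k(e^{\jmath\phi_d})|^2$ into a real trigonometric polynomial and then apply the tangent-half-angle substitution to obtain a quartic over $(1+\beta_d^2)^2$. The only organizational difference is that the paper writes $|z|^2=(\Re z)^2+(\Im z)^2$ and expands each square separately (so the double-angle terms appear implicitly as $\cos^2\phi_d,\sin^2\phi_d,\sin\phi_d\cos\phi_d$), whereas you expand $z\bar z$ and collect by harmonic, invoking $\cos 2\phi_d,\sin 2\phi_d$ explicitly; the two bookkeepings are equivalent and lead to the same coefficients.
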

\begin{proof}
see Appendix \ref{ap:rk2}.
\end{proof}
Based on Lemma \ref{lemm_poly1}, Problem $P^{\infty,\theta}_{d,\bx^{(n)}}$  is equivalent to the following optimization problem,
\begin{flalign}\label{eq:ParetoB111}
\bar{P}^{\infty,\theta}_{d,\beta_d}
\begin{dcases}
\min_{\beta_d\in {\cal{R}}} &\displaystyle{\max_{k=1,\ldots, N-1}}\frac{\widetilde{p}_{k}(\beta_d)}{\widetilde{q}(\beta_d)}
\end{dcases}
\end{flalign}
where
\begin{equation}
\widetilde{p}_{k}(\beta_d) = {\theta {p}_{k}(\beta_d) + (1 - \theta) \sum_{l=1}^{N-1}{p}_{l}(\beta_d)}
\end{equation}
 and
 \begin{equation}\label{eq:qkbeta}
 \widetilde{q}(\beta_d) = (1+\beta_d^2)^2,
 \end{equation}
with
\begin{equation}\label{eq:pkbeta}
{p}_{k}(\beta_d) = \mu_{dk} \beta_d^4 + \kappa_{dk} \beta_d^3 + \xi_{dk} \beta_d^2 + \eta_{dk} \beta_d + \rho_{dk}.
\end{equation}
In particular, ${\widetilde{p}_{k}(\beta_d)}$ and ${\widetilde{q}(\beta_d)}$ are  non-negative quartic polynomials. Now, let $\bar{\gamma} \in {\cal{R}}^{+}$ be an slack variable and $v^{\star}$ the optimal value of the min-max optimization Problem $\bar{P}^{\infty,\theta}_{d,\beta_d}$  whose existence is ensured by Weierstrass theorem applied to $P^{\infty,\theta}_{d,\bx^{(n)}}$. It can be checked whether the optimal value $v^{\star}$ is lower than or equal to a given value $\bar{\gamma}$ solving the feasibility problem
\begin{flalign}\label{eq:optfeas}
\widetilde{P}^{\infty,\theta}_{\beta_d,\bar{\gamma}}
\begin{dcases}
\text{find} & \beta_d\\
s.t. & \frac{\widetilde{p}_{k}(\beta_d)}{\widetilde{q}(\beta_d)} \leq {\bar{\gamma}}, ~~ k = 1, \ldots, N-1
\end{dcases}
\end{flalign}
If $\widetilde{P}^{\infty,\theta}_{\beta_d,\bar{\gamma}}$ is feasible, then $v^{\star} \leq \bar{\gamma}$ and there exists a point in the feasible set achieving an objective value better than or equal to $\bar{\gamma}$. Conversely, if  Problem $\widetilde{P}^{\infty,\theta}_{\beta_d,\bar{\gamma}}$ is infeasible $v^{\star} \geq \bar{\gamma}$.
The above observation paves the way  to the development of an efficient iterative algorithm to solve $\bar{P}^{\infty,\theta}_{d,\beta_d}$ according to the bisection approach \cite{boyd2004convex}.  Precisely, at step $i$ the feasibility Problem \eqref{eq:optfeas} is solved with $\bar{\gamma} = \frac{u_{(i)}+w_{(i)}}{2}$ where $[w_{(i)},u_{(i)}]$  is the current interval containing the optimal value\footnote{As starting interval, $w_{(0)} = 0$ and $u_{(0)} =  f_{\theta}\left(\bx^{(n)}\right)$  are considered.} $v^{\star}$. Based on the feasibility check, it is possible to determine whether the optimal value is in the lower or in the upper half of the current interval and update the search accordingly with a consequent uncertainty halving. The procedure is repeated until the width of the interval is lower than or equal to a prescribed accuracy\footnote{The proposed algorithm can be used to solve any arbitrary generalized fractional programming problem involving quartic order functions of a real variable, with strictly positive denominators.}. \\
To study the feasibility of $\widetilde{P}^{\infty,\theta}_{d,\beta_d}$ for a given $\bar{\gamma}$, let us define the feasible set $ {\cal{A}}_k^{ \bar{\gamma}}$ as,
\begin{equation} \label{eq:calAk}
 {\cal{A}}_k^{ \bar{\gamma}} = \left\{ \beta_d | \left[ \widetilde{p}_{k}(\beta_d) - \bar{\gamma}\widetilde{q}(\beta_d)\right]\leq 0 \right\} , k = 1, \ldots, N-1,
 \end{equation}
 and the complement set $ \overline{\cal{A}}_k^{ \bar{\gamma}}$ as,
 \begin{equation}\label{eq:calAkbar}
 \overline{\cal{A}}_k^{ \bar{\gamma}} = \left\{ \beta_d | \left[ \widetilde{p}_{k}(\beta_d) - \bar{\gamma}\widetilde{q}(\beta_d)\right] > 0 \right\},  k = 1, \ldots, N-1.
 \end{equation}
 Therefore $\widetilde{P}^{\infty,\theta}_{d,\beta_d}$ is feasible if and only if,
 \begin{equation} \label{eq:feas_set}
 \bigcap_{k = 1}^{N-1}{\cal{A}}_k^{ \bar{\gamma}} \neq \varnothing
  \Leftrightarrow \overline{\bigcup_{k = 1}^{N-1}\overline{\cal{A}}_k^{ \bar{\gamma}}}\neq \varnothing
\Leftrightarrow {\bigcup_{k = 1}^{N-1}\overline{\cal{A}}_k^{ \bar{\gamma}}}\neq {\cal{R}}.
 \end{equation}
Conversely, if ${\cup_{k = 1}^{N-1}\overline{\cal{A}}_k^{ \bar{\gamma}}}= {\cal{R}}$ Problem $\widetilde{P}^{\infty,\theta}_{d,\beta_d}$ is infeasible. In a nutshell, to perform the feasibility check it is enough to compute the union of the intervals $\overline{{\cal{A}}}_k^{ \bar{\gamma}}$ defined in \eqref{eq:calAkbar} and to check for the possible gaps.
In this respect, an efficient technique to establish the presence of gaps can be developed just finding the roots of $\widetilde{p}_{k}(\beta_d) - \bar{\gamma}\widetilde{q}{(\beta_d)},$ $k=1,\ldots,N$ (see Appendix \ref{ap:feas} for details). A summary of
the complete procedure to optimize an arbitrary entry of the phase code is provided in {\bf Algorithm \ref{alg_CPM}}.
 \begin{algorithm}
\caption{Continuous Phase Code Entry Optimization}
\label{alg_CPM}
\textbf{Input:} Initial code vector $\bx^{(n)}$, code entry $d$, $\theta$, and accuracy $\epsilon_1$;\\
\textbf{Output:} Optimal solution $x_d^{\star}$;
\begin{enumerate}
    \item Compute $u = f_{\theta}\left(\bx^{(n)}\right)$ as well as $\beta_d=\tan\left(\frac{\arg(x_d^{(n)})}{2}\right)$, and set $w = 0$;
        \item \textbf{do}
\begin{enumerate}
    \item $\bar {\gamma} = \frac{u+w}{2}$;
    \item $ \overline{\cal{A}}_k^{ \bar{\gamma}} = \left\{ \beta_d : \left[ \widetilde{p}_{k}(\beta_d) - \bar{\gamma}\widetilde{q}(\beta_d)\right] > 0 \right\}$;
    \item if ${\bigcup_{k = 1}^{N-1}\overline{\cal{A}}_k^{ \bar{\gamma}}}\neq {\cal{R}}$, $u = \bar{\gamma}$ and pick up a feasible solution  $\beta_d$; \textbf {else} $w = \bar{\gamma}$;
\end{enumerate}
\item \textbf{until} $u - w \leq \epsilon_1$;
    \item  Set $\phi^{\star}_d = 2 \mbox{atan}\left(\beta_d^{\star}\right) $, with $\beta_d^{\star}$ the obtained $\epsilon_1$-optimal solution;
    \item Set $x_d^{\star} = e^{\jmath \phi^{\star}_d} $.
\end{enumerate}
\end{algorithm}
\\
{\bf Remark 1.} To establish the computational complexity of {\bf Algorithm \ref{alg_CPM} } it is necessary to observe that the main actions in its implementation are:
\begin{enumerate}
\item calculation of  ${\cup_{k = 1}^{N-1}\overline{\cal{A}}_k^{ \bar{\gamma}}}$;
\item bisection iterations;
\item evaluation of the optimal phase.
\end{enumerate}
Calculation of $\overline{\cal{A}}_k^{ \bar{\gamma}}$, $k=1,\ldots,N-1$, involves the evaluation of the roots of the fourth order polynomial $\widetilde{p}_{k}(\beta_d) - \bar{\gamma}\widetilde{q}(\beta_d)$ which can be done in closed form via Cardano's procedure \cite{shmakov2011universal}.
Hence, ${\cup_{k = 1}^{N-1}\overline{\cal{A}}_k^{ \bar{\gamma}}}$  can be obtained ordering the resulting real roots (possibly merging some overlapping intervals) with an overall computational complexity ${\cal{O}}(N\log_2(N))$ in the worst case \cite{seidel2005top}.  As to the bisection method, in each iteration the search interval is divided in two parts. As a consequence, the interval size after $\bar{n}$ iterations is $2^{-\bar{n}}(u_{(0)}-w_{(0)})$. It follows that ${\cal{K}}=\lceil\log_2{(\frac{u_{(0)}-w_{(0)}}{\epsilon_1})}\rceil$ iterations are required before the algorithm terminates. Finally, since each step involves the solution of $\widetilde{P}_{d,\beta_d}^{\infty,\theta}$, the overall complexity is ${\cal{O}}({\cal{K}}N\log_2(N))$.
 \subsection{Discrete Phase Code Design}
Let us now consider Problem $P^{M,\theta}_{d,\bx^{(n)}}$ and develop an efficient procedure to find its optimal solution exploiting Discrete Fourier Transform (DFT)\footnote{Note that, performing quantization of a good continuous phase sequence does not guarantee a good discrete phase sequence in general.}. In terms of $\phi_d=\arg(x_d)$, $P^{M,\theta}_{d,\bx^{(n)}}$  can be recast as,
\begin{flalign} \label{eq:pbar}
\widetilde{P}^{M,\theta}_{d,\phi_d}
\begin{dcases}
\min_{\phi_d}&g_{\theta}(\phi_d)\\
s.t. & \phi_d \in \bphi_{M}
\end{dcases}
\end{flalign}
where $\bphi_{M} \triangleq \bigg\{0, \frac{2\pi}{M},\frac{4\pi}{M},\ldots,\frac{2\pi(M-1)}{M}\bigg\}$ and,
\begin{IEEEeqnarray*}{rCl}\label{eq:gtpdpm}
g_{\theta}(\phi_{d}) & = & \displaystyle{\max_{k=1,\ldots, N-1}}\bigg[  \theta { \left|{a_{dk}e^{\jmath \phi_d}+b_{dk}e^{-\jmath \phi_d}+c_{dk}} \right|^2} \nonumber\\ && \qquad +\> ( 1- \theta ) \sum_{l=1}^{N-1}{ \left|{a_{dl} e^{\jmath \phi_d}+b_{dl}e^{-\jmath \phi_d}_d+c_{dl}} \right|^2} \bigg].
\end{IEEEeqnarray*}
Evaluating the squared modulus of the autocorrelation in correspondence of the phase variable $\phi_d$ as
\begin{equation} \label{eq:rAB2r3}
\begin{aligned}
\bigg|\widetilde{r}_{k}(\phi_d)\bigg|^2& =\bigg|{r}_{k}(e^{j\phi_d)}\bigg|^2 \\
& =  \left|{a_{dk} e^{\jmath \phi_d}+b_{dk}e^{-\jmath \phi_d}+c_{dk}} \right|^2,
\end{aligned}
\end{equation}
 the following lemma provides a key result to tackle Problem \eqref{eq:pbar}.
\begin{lemma}\label{lem_dpm} {Let $${{\bnu}_{dk}} = [|\widetilde{r}_{k} (\bar{\phi}_1)|^2, |\widetilde{r}_{k} (\bar{\phi}_2)|^2, \ldots, |\widetilde{r}_{k} (\bar{\phi}_{M})|^2]^T \in {\cal{R}}^{M},$$ with $\bar{\phi}_i=\frac{2 \pi (i-1)}{M}$, $i=1,\ldots,M$, and $\bzeta_{dk} = [a_{dk} , c_{dk}, b_{dk}, \bzero_{1\times (M-3)}]^T \in {\cal{R}}^{M}$. If $M \geq 3$, then}
 \begin{equation} \label{eq:FFTproof}
 {\bnu}_{dk} = |\text{DFT}(\bzeta_{dk})|^2,
 \end{equation}
  where $\text{DFT}(\bzeta_{dk})$ is the $M$-points DFT of the vector $\bzeta_{dk}$ and the square modulus is element wise.
  \end{lemma}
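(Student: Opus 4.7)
The plan is to compute directly the $M$-point DFT of the sparse vector $\bzeta_{dk}$ and verify that each of its entries coincides with $\widetilde{r}_k(\bar{\phi}_i)$ up to a unit-modulus rotation; taking squared moduli then eliminates the rotation and delivers the claimed identity \eqref{eq:FFTproof} component by component.

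First, I would substitute $\bzeta_{dk}=[a_{dk},c_{dk},b_{dk},\bzero_{1\times(M-3)}]^T$ into the standard DFT definition
$$
\text{DFT}(\bzeta_{dk})_i \;=\; \sum_{m=1}^{M} (\bzeta_{dk})_m \, e^{-\jmath\frac{2\pi (m-1)(i-1)}{M}}, \qquad i=1,\ldots,M,
$$
so that only three terms survive. The hypothesis $M\ge 3$ enters precisely here: it guarantees that the three coefficients $a_{dk}$, $c_{dk}$, $b_{dk}$ occupy distinct positions in $\bzeta_{dk}$ and therefore populate three distinct frequency bins without aliasing. Recalling that $\bar{\phi}_i = 2\pi(i-1)/M$, the sum collapses to
$$
\text{DFT}(\bzeta_{dk})_i \;=\; a_{dk} + c_{dk}\, e^{-\jmath\bar{\phi}_i} + b_{dk}\, e^{-\jmath 2\bar{\phi}_i}.
$$

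Next, I would multiply both sides by the unit-modulus factor $e^{\jmath\bar{\phi}_i}$ and recognize, in view of \eqref{eq:rAB2r3}, that
$$
e^{\jmath\bar{\phi}_i}\, \text{DFT}(\bzeta_{dk})_i \;=\; a_{dk}\, e^{\jmath\bar{\phi}_i} + b_{dk}\, e^{-\jmath\bar{\phi}_i} + c_{dk} \;=\; \widetilde{r}_k(\bar{\phi}_i).
$$
Since $|e^{\jmath\bar{\phi}_i}|=1$, taking squared moduli on both sides removes the compensating rotation, yielding $|\widetilde{r}_k(\bar{\phi}_i)|^2 = |\text{DFT}(\bzeta_{dk})_i|^2$ for every $i=1,\ldots,M$, which is exactly the $i$-th component of \eqref{eq:FFTproof}.

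There is no real conceptual obstacle beyond careful bookkeeping: the proof reduces to matching the DFT basis functions to the three phasors $e^{\jmath\bar{\phi}_i}$, $1$ and $e^{-\jmath\bar{\phi}_i}$ that build $\widetilde{r}_k(\bar{\phi}_i)$. The only delicate point is tracking the DFT sign convention, since this is what forces the specific ordering of the nonzero entries of $\bzeta_{dk}$ (with $a_{dk}$ in bin $0$, $c_{dk}$ in bin $1$, and $b_{dk}$ in bin $2$) and imposes the rotation by $e^{\jmath\bar{\phi}_i}$ before squared moduli are taken; once the ordering is fixed, the $M\ge 3$ hypothesis is both necessary (so three frequency bins can be addressed) and sufficient.
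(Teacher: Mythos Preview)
Your proposal is correct and follows essentially the same approach as the paper: both compute the $M$-point DFT of $\bzeta_{dk}$ explicitly, observe that each entry equals $\widetilde{r}_k(\bar{\phi}_i)$ up to the unit-modulus factor $e^{\pm\jmath\bar{\phi}_i}$, and then take moduli to remove that rotation. Your added remark on why $M\ge 3$ is needed is a nice clarification that the paper leaves implicit.
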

 \begin{proof}
  See Appendix \ref{ap:dpm}.
  \end{proof}
Now, defining the matrix $\bU \in \mathbb{R}^{(N-1)\times M} $ whose $k$th row is
 $$\bu^{k} = \theta { {\bnu}_{dk}^T }+ ( 1- \theta ) \sum_{l=1}^{N-1}{ {\bnu}_{dl}^T }\in \mathbb{R}^M,\,\,\,\,k=1,\ldots,N-1,$$
 the optimal solution to $\widetilde{P}^{M,\theta}_{d,\phi_d}$ is given by
\begin{equation}\label{eq:phioptdis}
\phi_{d}^{\star} = \frac{2\pi (i^{\star}-1)}{M},
\end{equation}
where
\begin{equation}\label{eq:i_star}
i^{\star}=\arg \displaystyle{\min_{i=1,\ldots,M}}\Big\{\max\left(\bu_{i}\right)\Big\},
\end{equation}
and $\bu_{i}\in \mathbb{R}^{(N-1)}$ is the $i$th column of $\bU$. Hence, based on Lemma \ref{lem_dpm} and \eqref{eq:phioptdis}, the optimal phase code entry can be efficiently computed as $x_d^{\star} = e^{\jmath \phi_d^{\star}}$ using DFT. In \textbf{Algorithm \ref{alg_DPM}} the proposed approach is reported.\\
{\bf Remark 2.} According to Lemma \ref{lem_dpm}, the developed approach assumes $M\geq 3$. To design binary phase sequences a slight modification of {\bf Algorithm \ref{alg_DPM}} is required. To this end, observe that when $x_d$ is a real binary  variable
\begin{equation} \label{eq:CPM:rdk2}
\begin{aligned}
{r}_{k}(x_d) = &  x_d (x_{d+k}\mathbf{1}_{A}(d+k)+x_{d-k}\mathbf{1}_{A}(d-k)) \\& + \sum_{i=1 , i \neq d , d - k}^{N-k}{x_i x_{i+k}}, ~~~ k = 1 , \ldots , N-1,
\end{aligned}
\end{equation}
implying that
\begin{equation}
|\widetilde{r}_{k}(\phi_d)|^2= \left|{\bar{a}_{dk} e^{\jmath \phi_d}+\bar{c}_{dk}} \right|^2,
\end{equation}
 with $\phi_d \in \{0, \pi\}$ and $\bar{a}_{dk} = x_{d+k}\mathbf{1}_{A}(d+k) + x_{d-k}\mathbf{1}_{A}(d-k)$, $\bar{c}_{dk}  = \sum_{i=1 , i \neq d ,  d - k}^{N-k}{x_i x_{i+k}}$ real coefficients. As a consequence, it is sufficient to update in Lemma \ref{lem_dpm} the definition of the vector  ${\bzeta}_{dk}$ as $\bar{\bzeta}_{dk} = [\bar{a}_{dk} ,\bar{c}_{dk}]^T \in {\cal{R}}^{2}$.\\
{\bf Remark 3.} {\bf Algorithm 3} needs the evaluation of $(N-1)$ different $M$-points DFTs. Each of them can be efficiently computed  via a Fast Fourier Transform (FFT).  Therefore the computational complexity order  is  ${\cal{O}}(NM\log_2{M})$ \cite{refFFTcomp}.
 \begin{algorithm}
\caption{Discrete Phase Code Entry Optimization}
\label{alg_DPM}
\textbf{Input:} Initial code vector $\bx^{(n)}$, code entry $d$, $\theta$, and $M$;\\
\textbf{Output:} Optimal solution $x_d^{\star}$;
\begin{enumerate}
    \item { If $M \geq 3$ then $\forall k \in \left\{1,\ldots,N-1\right\}$
    \begin{itemize}
    \item{Set $a_{dk} = x_{d+k}^{*}\mathbf{1}_{A}(d+k)$, $b_{dk} = x_{d-k}\mathbf{1}_{A}(d-k)$ and $c_{dk}  = \sum_{i=1 , i\neq d , d - k}^{N-k}{x_i x_{i+k}^*}$;}
    \item{Set $ \bzeta_{dk} = [a_{dk} , c_{dk}, b_{dk}, \bzero_{1\times (M-3)}]^T$;}
    \item{Set ${\bnu}_{dk} = |\text{FFT}(\bzeta_{dk})|^2$;}
    \end{itemize}
    Else, if $M = 2$ then
    \begin{itemize}
    \item{Set $\bar{a}_{dk} = x_{d+k}\mathbf{1}_{A}(d+k)+x_{d-k}\mathbf{1}_{A}(d-k)$ and $\bar{c}_{dk}  = \sum_{i=1 , i \neq d , d - k}^{N-k}{x_i x_{i+k}}$;}
    \item{Set $ \bar{\bzeta}_{dk} = [\bar{a}_{dk} , \bar{c}_{dk}]^T$;}
    \item{Set ${\bnu}_{dk} = |\text{FFT}(\bar{\bzeta}_{dk})|^2$;}
    \end{itemize}
    }
    \item Calculate $\bu^{k} = \theta { {\bnu}_{dk}^T }+ ( 1- \theta ) \sum_{l=1}^{N-1}{ {\bnu}_{dl}^T }\in \mathbb{R}^M,\,\,\,\,  k=1,\ldots,N-1$ and $ \bomega_d = [\max\{\bu_{1}\}, \max\{\bu_{2}\}, \ldots , \max\{\bu_{M}\}]^T $;
    \item Find the index $i^{\star}$ where $\bomega_d $ is minimum;
    \item Set $x_d^{\star} = e^{\jmath \phi^{\star}_d} $ with $\phi_{d}^{\star} = \frac{2 \pi (i^{\star}-1)}{M}$.
\end{enumerate}
\end{algorithm}
\subsection{$l_p$-norm Minimization for Algorithm Initialization}\label{heuristic_approach}
The solution obtained via the designed method  depends evidently on the initial sequence. As a result,
 the development of a heuristic approach that can be used to provide high quality starting points is valuable. To this end, recall that the minimization of the $l_p$-norm of the  autocorrelation vector $[r_1, r_2, \ldots, r_{N-1}]$ allows to trade-off ISL and PSL values of the devised sequence as the value of $p$ increases \cite{7362231,5765722,4383616}. Besides, the PSL coincides with the limit as $p \rightarrow  \infty$ of the autocorrelation vector $l_p$-norm.
 According to the above considerations, a procedure to obtain  phase-only codes with low autocorrelation $l_p$-norm is introduced. In particular, with reference to the PSL metric, a start-stop procedure involving a sequence of $l_p$-norm minimization problems with increasing value of $p$, $p_1 < p_2 < \ldots < p_e$ say,  is employed  similarly to \cite{7362231}. 
   Specifically, the algorithm is initialized with Frank, Golomb or a random sequence and the $l_p$-norm minimization starts with $p=2$, i.e., $p_1=2$. Then, $p$ is set to $p_2$ and the algorithm starts with the solution obtained for $p=p_1$, and so on. In general, the $l_p$-norm minimization problem for continuous and discrete cases, can be formulated as
 \begin{flalign}\label{eq:lpOpt}
H^{\infty,p}
\begin{dcases}
\min_{\bx}&\sum_{k=1}^{N-1}{|r_{k}|^p}\\
s.t. & \bx \in \Omega_{\infty}
\end{dcases}
&,
H^{M,p}
\begin{dcases}
\min_{\bx}& \sum_{k=1}^{N-1}{|r_{k}|^p}\\
s.t. & \bx \in \Omega_{M}
\end{dcases}
\end{flalign}
To tackle $H^{\infty,p}$ and $H^{M,p}$ the framework proposed in \cite{razaviyayn2013unified} is exploited, where each variable block corresponds to one code entry and the surrogate function of \cite{7362231} is adopted. Specifically, at step $n+1$ of the iterative procedure, the following optimization problems are considered,
\begin{flalign}\label{eq:lp2}
H^{\infty,p}_{d,\bx^{(n)}}
\begin{dcases}
\min_{x_d}&\sum_{k=1}^{N-1}{\widetilde{\tau}_k |r_{k}|^2 +  \widetilde{\lambda}_k \Re\left\{ r_k^* \frac{r_k^{(n)}}{\left|r_k^{(n)}\right|} \right\} }\\
s.t. & |x_d|  = 1
\end{dcases}
\end{flalign}
\begin{flalign}
H^{M,p}_{d,\bx^{(n)}}
\begin{dcases}
\min_{x_d}& \sum_{k=1}^{N-1}{\widetilde{\tau}_k |r_{k}|^2 + \widetilde{\lambda}_k |r_{k}| }\\
s.t. & x_d \in \{1, e^{\jmath\frac{2\pi}{M}},\ldots,e^{\jmath\frac{2\pi(M-1)}{M}}\}
\end{dcases}
\end{flalign}
where
\begin{equation}
\widetilde{\tau}_k = \frac{t_n^p - \left|r_k^{(n)}\right|^p-p\left|r_k^{(n)}\right|^{p-1}\left(t_n - \left|r_k^{(n)}\right|\right)}{\left(t_n - \left|r_k^{(n)}\right|\right)^2},
\end{equation}
\begin{equation}
\widetilde{\lambda}_k = p\left|r_k^{(n)}\right|^{p-1}-2\widetilde{\tau}_k\left|r_k^{(n)}\right|,
\end{equation}
\begin{equation}
t_n = \left({\sum_{k=1}^{N-1}\left|{r_k^{(n)}}\right|}\right)^{\frac{1}{p}},
\end{equation}
with $\left[r_1^{(n)},r_2^{(n)},\ldots, r_{N-1}^{(n)}\right]^T$ is the optimized autocorrelation vector at step $n$.
Solution techniques solving $H^{\infty,p}_{d,\bx^{(n)}}$ and $H^{M,p}_{d,\bx^{(n)}}$ are now developed.
\subsubsection{Starting Point for the Continuous Phase Code Design}
The first term $\sum_{k=1}^{N-1}\tilde{\tau}_k|r_k|^2$  in the objective of $H^{\infty,p}_{d,\bx^{(n)}}$ can be recast as the ratio of  two quartic polynomials using Lemma \ref{lemm_poly1}. As to the second term, the following lemma is used.
\begin{lemma}\label{lemm_poly2}
{Performing the change of variable $\beta_d \triangleq \tan\left({\frac{\phi_d}{2}}\right)$, $\Re\bigg\{r_k^*\frac{r_k^{(n)}}{|r_k^{(n)}|}\bigg\}$ can be recast as,}
\begin{equation*} \label{eq:rk1}
\Re\left\{r_k^*\frac{r_k^{(n)}}{\left|r_k^{(n)}\right|}\right\} = \frac{\widetilde{\mu}_{dk} \beta_{d}^4 +  \widetilde{\kappa}_{dk} \beta_{d}^3
+ \widetilde{\xi}_{dk} \beta_{d}^2 + \widetilde{\eta}_{dk} \beta_{d} + \widetilde{\rho}_{dk}}{(1+\beta_{d}^2)^2},
\end{equation*}
where $\widetilde{\mu}_{dk},  \widetilde{\kappa}_{dk}  , \widetilde{\xi}_{dk}, \widetilde{\eta}_{dk}, \widetilde{\rho}_{dk}$ are defined in Appendix \ref{ap:rk1}.
\end{lemma}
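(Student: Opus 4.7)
The plan is to express $\Re\{r_k^{*}\, r_k^{(n)}/|r_k^{(n)}|\}$ as a first-order real trigonometric polynomial in $\phi_d$ and then apply the Weierstrass substitution, mirroring the proof of Lemma~\ref{lemm_poly1}. Because $r_k^{(n)}$ is fixed at the current iterate, the quantity $w_k \triangleq r_k^{(n)}/|r_k^{(n)}|$ is a unit-modulus constant that can be treated as a known coefficient. Combining this with the affine structure $r_k(x_d) = a_{dk}x_d + b_{dk}x_d^{*} + c_{dk}$ evaluated at $x_d = e^{\jmath\phi_d}$, I obtain
\begin{equation*}
r_k^{*}\, w_k \;=\; (a_{dk}^{*}\, w_k)\, e^{-\jmath\phi_d} + (b_{dk}^{*}\, w_k)\, e^{\jmath\phi_d} + c_{dk}^{*}\, w_k .
\end{equation*}
Taking real parts and grouping the terms proportional to $\cos\phi_d$ and $\sin\phi_d$ yields the canonical form $A_k\cos\phi_d + B_k\sin\phi_d + C_k$, with $A_k = \Re\{(a_{dk}^{*}+b_{dk}^{*})w_k\}$, $B_k = \Im\{(a_{dk}^{*}-b_{dk}^{*})w_k\}$, and $C_k = \Re\{c_{dk}^{*}\, w_k\}$, all real and explicitly computable.

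Next, I would invoke the Weierstrass identities $\cos\phi_d = (1-\beta_d^2)/(1+\beta_d^2)$ and $\sin\phi_d = 2\beta_d/(1+\beta_d^2)$ to obtain
\begin{equation*}
\Re\{r_k^{*}\, w_k\} \;=\; \frac{A_k(1-\beta_d^2) + 2 B_k\, \beta_d + C_k(1+\beta_d^2)}{1+\beta_d^2},
\end{equation*}
a quadratic numerator over $(1+\beta_d^2)$. To match the quartic-over-$(1+\beta_d^2)^2$ template claimed in the lemma, I multiply both numerator and denominator by $(1+\beta_d^2)$. The resulting numerator $(1+\beta_d^2)\bigl[A_k(1-\beta_d^2) + 2 B_k\,\beta_d + C_k(1+\beta_d^2)\bigr]$ is a quartic in $\beta_d$, and collecting powers of $\beta_d$ directly reads off the five real coefficients $\widetilde{\mu}_{dk}, \widetilde{\kappa}_{dk}, \widetilde{\xi}_{dk}, \widetilde{\eta}_{dk}, \widetilde{\rho}_{dk}$ in terms of $A_k,B_k,C_k$, hence in terms of $a_{dk}, b_{dk}, c_{dk}$ and $r_k^{(n)}$. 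These closed-form expressions constitute the content of Appendix~\ref{ap:rk1}.

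The reason to inflate the denominator to $(1+\beta_d^2)^2$ rather than to keep $(1+\beta_d^2)$ is structural: when this lemma is combined with Lemma~\ref{lemm_poly1} inside the objective of $H^{\infty,p}_{d,\bx^{(n)}}$, the $|r_k|^2$ contributions are already in quartic/$(1+\beta_d^2)^2$ form, and a common denominator allows the subsequent Cardano-based root-finding machinery to operate on a single quartic numerator per lag. I do not foresee any conceptual obstacle: the Weierstrass map is a bijection between $\phi_d\in(-\pi,\pi)$ and $\beta_d\in\mathbb{R}$ (with $\phi_d=\pi$ handled as a limiting case), $(1+\beta_d^2)^2$ is strictly positive so no spurious singularity is introduced, and the coefficient identification is a deterministic polynomial expansion. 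The main effort is therefore bookkeeping: organizing the expansion so that the five coefficients are presented in readable closed form, analogously to the matching calculation already performed in Lemma~\ref{lemm_poly1}.
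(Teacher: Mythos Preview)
Your proposal is correct and follows essentially the same route as the paper: define the unit-modulus constant $w_k=r_k^{(n)}/|r_k^{(n)}|$, reduce $\Re\{r_k^{*}w_k\}$ to the form $A_k\cos\phi_d+B_k\sin\phi_d+C_k$, apply the Weierstrass substitution to get a quadratic over $(1+\beta_d^2)$, and then inflate numerator and denominator by $(1+\beta_d^2)$ to reach the quartic-over-$(1+\beta_d^2)^2$ template. Your identification of $A_k,B_k,C_k$ matches the paper's $\widetilde a_{dkr}+\widetilde b_{dkr}$, $\widetilde a_{dki}-\widetilde b_{dki}$, $\widetilde c_{dkr}$, and your remark that the denominator inflation is done solely to share the common denominator with Lemma~\ref{lemm_poly1} is exactly the point.
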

\begin{proof}
See Appendix \ref{ap:rk1}.
\end{proof}
Now, based on Lemma \ref{lemm_poly1}, Lemma \ref{lemm_poly2}, as well as equations \eqref{eq:pkbeta} and \eqref{eq:qkbeta}, Problem $H^{\infty,p}_{d,\bx^{(n)}}$ is equivalent to,
\begin{flalign}
\widetilde{H}^{\infty,p}_{d,\beta_d}
\begin{dcases}\label{eq:lp3}
\min_{\bx}&\frac{1}{\widetilde{q}(\beta_d)}\sum_{k=1}^{N-1}{\widetilde{\tau}_k p_k(\beta_d) + \widetilde{\lambda}_k h_k(\beta_d)} \\
s.t. & \bx \in \Omega_{\infty}\nonumber
\end{dcases}
\end{flalign}
where
\begin{equation}
h_{k}(\beta_d) =  \widetilde{\mu}_{dk} \beta_{d}^4 + \widetilde{\kappa}_{dk} \beta_{d}^3 + \widetilde{\xi}_{dk} \beta_{d}^2 + \widetilde{\eta}_{dk} \beta_{d} + \widetilde{\rho}_{dk}.
\end{equation}
Finally, Problem $\widetilde{H}^{\infty}_{\beta_d,\widetilde{\gamma}}$ can be efficiently solved using the simplified version of \textbf{Algorithm \ref{alg_CPM}} resulting from the presence of just one fractional quartic function\footnote{Due to the special form of $\widetilde{q}(\beta_d)$ the optimal solution can also obtained finding the real roots of a quartic function (related to the first order derivative of the objective) and evaluating the objective function in these points as well as at $\infty$.}.
\subsubsection{Starting Point for Discrete Phase Code Design}
The discrete phase code design problem can be cast as,
\begin{flalign}
H^{M, p}_{d,\phi_d}
 \begin{dcases}
\min_{x_d}& \sum_{k=1}^{N-1}{\widetilde{\tau}_k |{\widetilde{r}_{k}(\phi_d)} |^2 + \widetilde{\lambda}_k |{\widetilde{r}_{l}(\phi_d)} | }\nonumber \\
s.t. &  \phi_d \in \bphi_{M} \nonumber
\end{dcases}
\end{flalign}
 Using Lemma \ref{lem_dpm} and considering the definition of ${\bnu}_{dk}$ in \eqref{eq:FFTproof}, the optimal $x_d^\star$ can be efficiently obtained as
\begin{equation}
x_d^\star=e^{j\frac{(i^{\star}-1)}{M}},
\end{equation}
with
\begin{equation}
i^{\star}=\arg \displaystyle{\min_{i=1,\ldots,M}}\Big\{\by\Big\},
\end{equation}
and
\begin{equation}
\by =  \displaystyle{\sum_{k=1}^{N-1}}\left(\widetilde{\tau}_{k} { {\bnu}_{dk}} + \widetilde{\lambda}_k\sqrt{{\bnu}_{dk}}\right).
\end{equation}

\section{Performance Analysis}\label{perf_analysis}
This section is devoted to the performance analysis of the proposed algorithms for both continuous and discrete phase code design\footnote{In the following, Continuous Phase Method (CPM) and Discrete Phase Method (DPM) refer to {\bf Algorithm 1}
used to solve $P^{\infty,\theta}$ (based on {\bf Algorithm 2}) and $P^{M,\theta}$ (based on {\bf Algorithm 3}), respectively. In both cases, the heuristic initialization given in Subsection \ref{heuristic_approach} is exploited unless otherwise stated.} exploiting PSL and ISL as performance measures. Moreover, a comparison with the state of art algorithms available in the open literature, i.e., CAN \cite{5072243}\footnote{The
Matlab code for CAN is downloaded from the website (http://www.sal.ufl.edu/book/).}, ITROX \cite{6142119}, MWISL-Diag, and MM-PSL \cite{7362231}\footnote{MWISL-Diag and MM-PSL have been simulated according to Algorithm 2 and Algorithm 4 in \cite{7362231}, respectively.} is conducted. All the considered procedures are initialized using the same set of starting codes
 composed by Golomb and Frank\footnote{Note that Frank sequences are defined for
lengths that are perfect squares.} codes as well as $5$ random sequences (if not differently specified). Hence, the best obtained objective value is reported and the resulting sequence picked up. Finally, the stopping criteria $|\text{obj}(\bx^{(n)})-\text{obj}(\bx^{(n-1)})| \leq 10^{-5}$ is used to terminate  all the algorithms.
\subsection{PSL Minimization}
In this subsection, the ability of the proposed algorithms to synthesize low PSL sequences is assessed. To this end, the Pareto weight is fixed to $\theta=1$ and the sequence of $p$-values for the selection of the initial starting point is $2,2^{2},2^{3},\ldots,2^{13}$, i.e., $p_i=2^i$, $i=1,\ldots,13$  (see Subsection \ref{heuristic_approach}). In Fig. \ref{Fig:psl1_a}.(a), the PSL versus the alphabet size $M$ of the devised sequences is displayed assuming $N=400$; precisely, $M = \{2, 4, 8, 16, 32, 64, 128, 256, 512\}$ is considered. For comparison purposes, the PSL values of CAN-based techniques\footnote{The procedure based on CAN algorithm providing discrete phase sequences is referred to as CAN(D).}, MM-PSL procedures, and CPM algorithm are reported. The results highlight that CPM outperforms CAN, CAN(D), MM-PSL, and DPM. Additionally, DPM ensures better performance than CAN(D) and  achieves lower PSL values than CAN and MM-PSL as the alphabet set is dense enough. Finally, CPM provides a lower bound to DPM which is tighter and tighter as the alphabet size increases, i.e., DPM converges to CPM when  $M\rightarrow \infty$.
 \begin{figure}%
\centering
\subfigure[PSL (dB) versus constellation size.]{\includegraphics[width = 85mm]{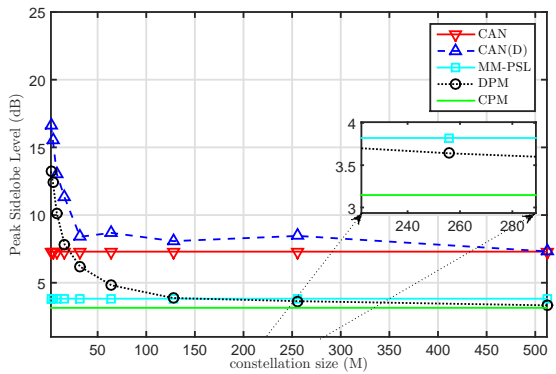}}\qquad
\subfigure[PSL (dB) versus iteration.]{\includegraphics[width = 85mm]{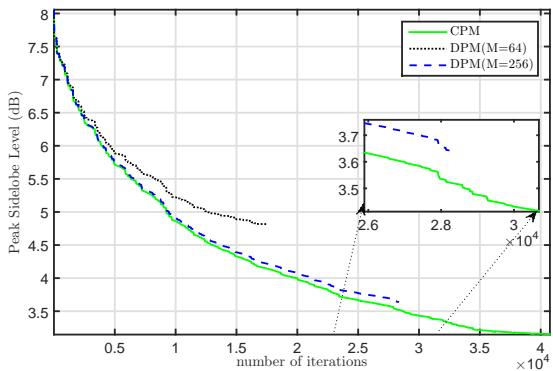}}\    %
\caption{The effects of constellation size on PSL values.}
\label{Fig:psl1_a}
\end{figure}\\
In Fig. \ref{Fig:psl1_a}.(b) the convergence behavior  after heuristic initialization of CPM and DPM, for $M = 64$ and $M = 256$,  is analyzed. As expected,  the lower the alphabet size the faster the convergence but the worst the obtained PSL. Indeed, increasing $M$ (CPM is tantamount to considering $M=\infty$) the feasible set of DPM becomes larger and larger enabling better and better PSLs; nevertheless, more and more iterations are required to explore the enlarged domain. Finally, the convergence curves illustrate the monotonic decreasing behavior of the objective function.\\
The second experiment provides the PSL for sequence lengths $[5^2,7^2,10^2,12^2,15^2,18^2,20^2,25^2,30^2,32^2]$. The results of CPM, DPM, CAN, and MM-PSL are shown in Fig. \ref{Fig:psl2}.
The plot reveals that CPM exhibits a performance level better than the counterparts for all the considered lengths. Meanwhile, DPM with alphabet size $M=256$ obtains usually better PSL as compared with MM-PSL.
\begin{figure}%
\centering
{\includegraphics[width = 85mm]{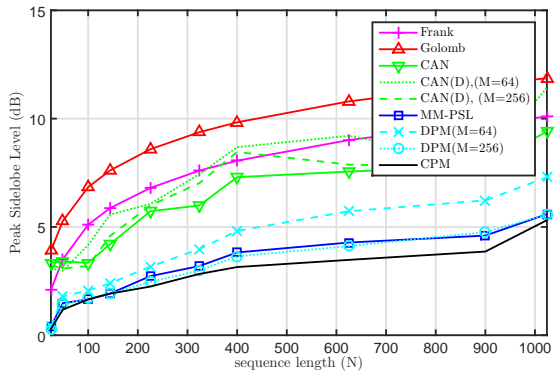}}\qquad
\caption{PSL (dB) versus sequence length.}
\label{Fig:psl2}
\end{figure}
\begin{figure}%
\centering
\subfigure[$M = 2$.]{\includegraphics[width = 85mm]{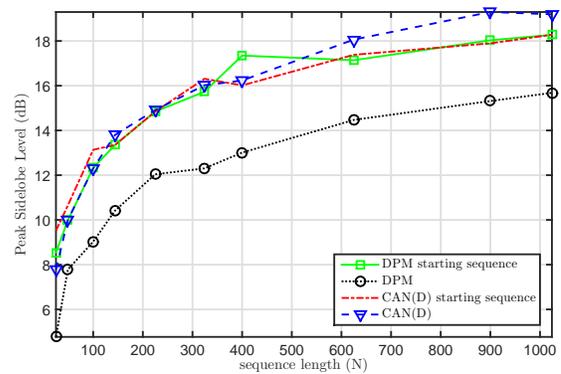}}\qquad
\subfigure[$M = 8$.]{\includegraphics[width = 85mm]{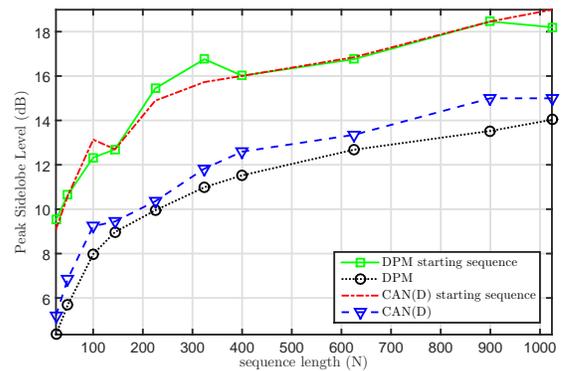}}\qquad
\subfigure[$M = 16$.]{\includegraphics[width = 85mm]{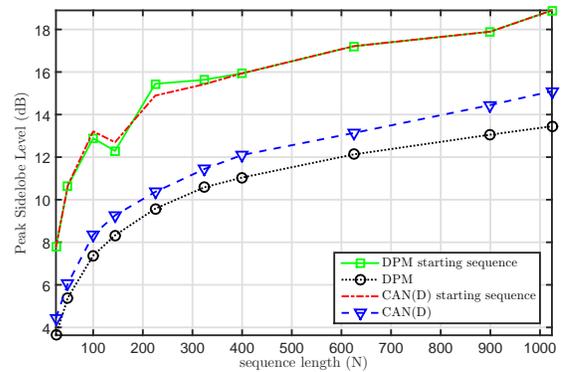}}\    %
\caption{A comparison between CAN(D) and DPM starting from the same random sequence.}
\label{Fig:psl3}
\end{figure}

Next, the capability of DPM to design discrete phase sequences is assessed. In Fig. \ref{Fig:psl3}, the PSL versus the code length of  CAN(D) and DPM are reported for alphabet sizes $M = 2$, $M = 8$, and $M = 16$.
For each $M$, the same set of $5$ random sequences (drawn by a uniform distribution over the set of the feasible sequences) is used to initialize both the algorithms. The results show that DPM outperforms CAN(D) significantly with maximum gains of 3.98 dB, 1.47 dB, and 1.65 dB for $M=2$, $M=8$, and $M=16$, respectively.\\
The case of binary phase sequence design is further investigated considering ITROX \cite{6142119} (the only optimization-based algorithm currently available in the open literature that provides good binary phase sequences) as benchmark.
Precisely, in Fig. \ref{Fig:psl4}.(a) the PSL versus $N$ is displayed for both ITROX and DPM where the same set of $5$ binary random codes is used for initialization. To highlight the quality of DPM algorithm also the PSL of MPS sequences, obtained via exhaustive search up to the length of $105$, is shown in this figure. The plot clearly illustrates the effectiveness of our approach. Indeed, DPM significantly outperforms ITROX and provides a PSL quite close to the global optimum of MPS sequences but with  a much lower computational complexity and without restrictions to the maximum code length. This last feature is particularly appealing since the higher $N$ the higher the pulse compression.
Interestingly, DPM provides in some circumstances the global optimal solution (see in Fig. \ref{Fig:psl4} the points where DPM and MPS coincide). In this respect, in  Fig. \ref{Fig:psl4}.(b)  the autocorrelation function of a binary random sequence which leads to the Barker sequence of length $11$ though DPM  is depicted (notice that 15\% of the trials leads to this PSL value via DPM). As another example, in Fig. \ref{Fig:psl4}.(c) the autocorrelation function devised via DPM for sequence length $126$ is displayed. Remarkably, the PSL is equal to 8 (4\% of the trials leads to this PSL value via DPM and only 10\% is higher than or equal to 11) whereas the best PSL that \cite{nasrabadi2010survey} obtains  for the same sequence length using genetic algorithms is $11$ which further confirms the effectiveness of the new framework.

\begin{figure}%
\centering
\subfigure[A comparison among the PSL values of binary sequences obtained through DPM,  ITROX-AP, and exhaustive search (MPS). For each length, DPM and ITROX-AP have been run 5 times and from the resulting 5 PSL values the best one is chosen.]{\includegraphics[width = 85mm]{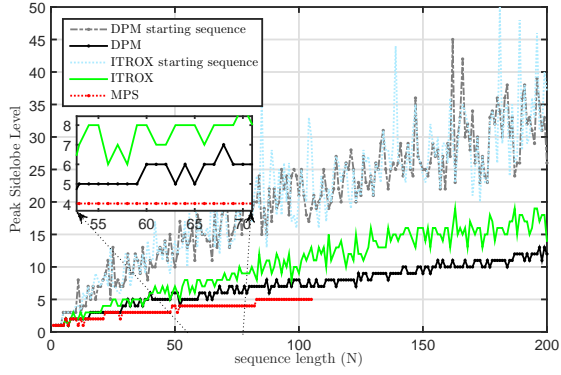}}\    \subfigure[Binary code design of length 11.]{\includegraphics[width = 85mm]{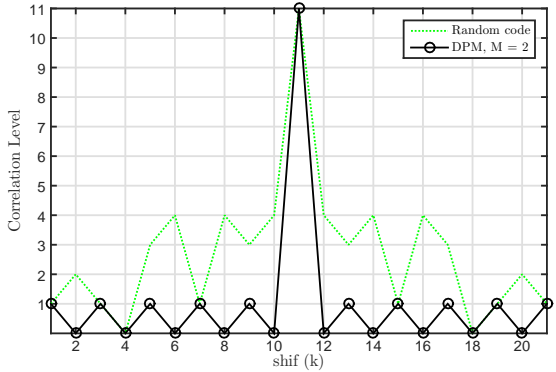}}\    %
\subfigure[Binary code design of length 126.]{\includegraphics[width = 85mm]{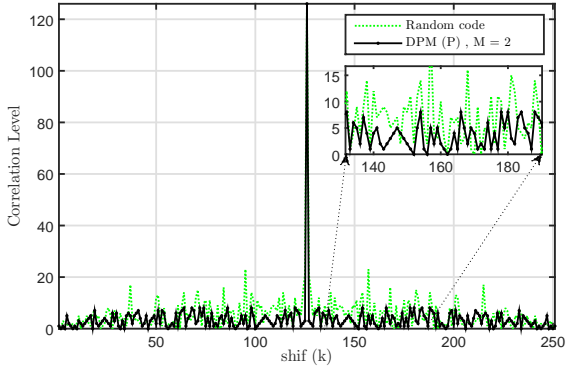}}\    %
\caption{Performance of DPM algorithm in different conditions.}
\label{Fig:psl4}
\end{figure}

\subsection{ISL Minimization}
The performance assessment of CPM and DPM for ISL minimization is now considered. In this case, $\theta = 0$ and the initialization procedure in Subsection \ref{heuristic_approach} is not performed. In Fig. \ref{Fig:isl1}.(a), the ISL versus $M$ is displayed for DPM, CAN(D), CAN, WISL-Diag\footnote{The weights of MWISL-Diag are set to one so as to account for ISL minimization.}, and CPM assuming $N = 400$. Interestingly, the continuous phase design strategies exhibit almost the same performance. As to the discrete phase code design, DPM outperforms the counterpart, i.e., CAN(D)\footnote{Notice that, the discrete phase counterpart to MWIS-Diag is not provided in \cite{7362231}.}. Specifically, CAN(D) requires a larger constellation size than DPM to achieve the same ISL value. In Fig. \ref{Fig:isl1}.(b), the convergence behavior of CPM and DPM, for $M=64$ and $M=256$, is plotted and similar considerations to those for Fig. \ref{Fig:psl1_a}.(b) hold true.
\begin{figure}%
\centering
\subfigure[ISL versus constellation size.]{\includegraphics[width = 85mm]{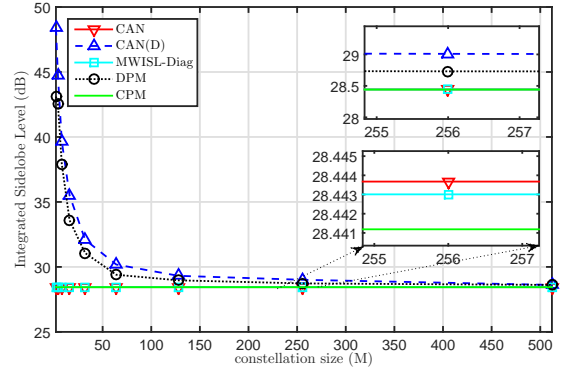}}\qquad
\subfigure[Convergence curve for ISL minimization with $N = 400$.]{\includegraphics[width = 85mm]{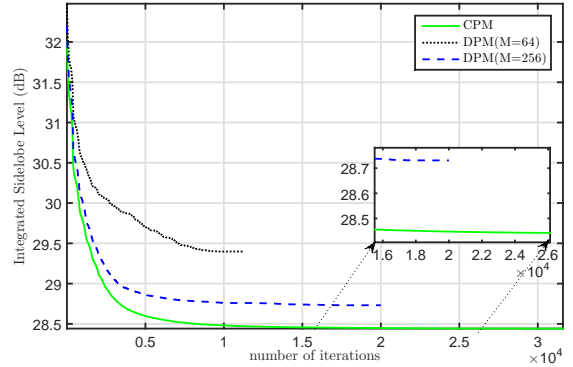}}\    %
\caption{The effect of constellation size on ISL values.}
\label{Fig:isl1}
\end{figure}
To further corroborate the effectiveness of our strategies, in Fig. \ref{Fig:isl2} the ISL versus $N$ is illustrated. In line with the previous results, the continuous phase design strategies are almost equivalent for all $N$ as well as  DPM outperforms CAN(D).\\
\begin{figure}%
\centering
{\includegraphics[width = 85mm]{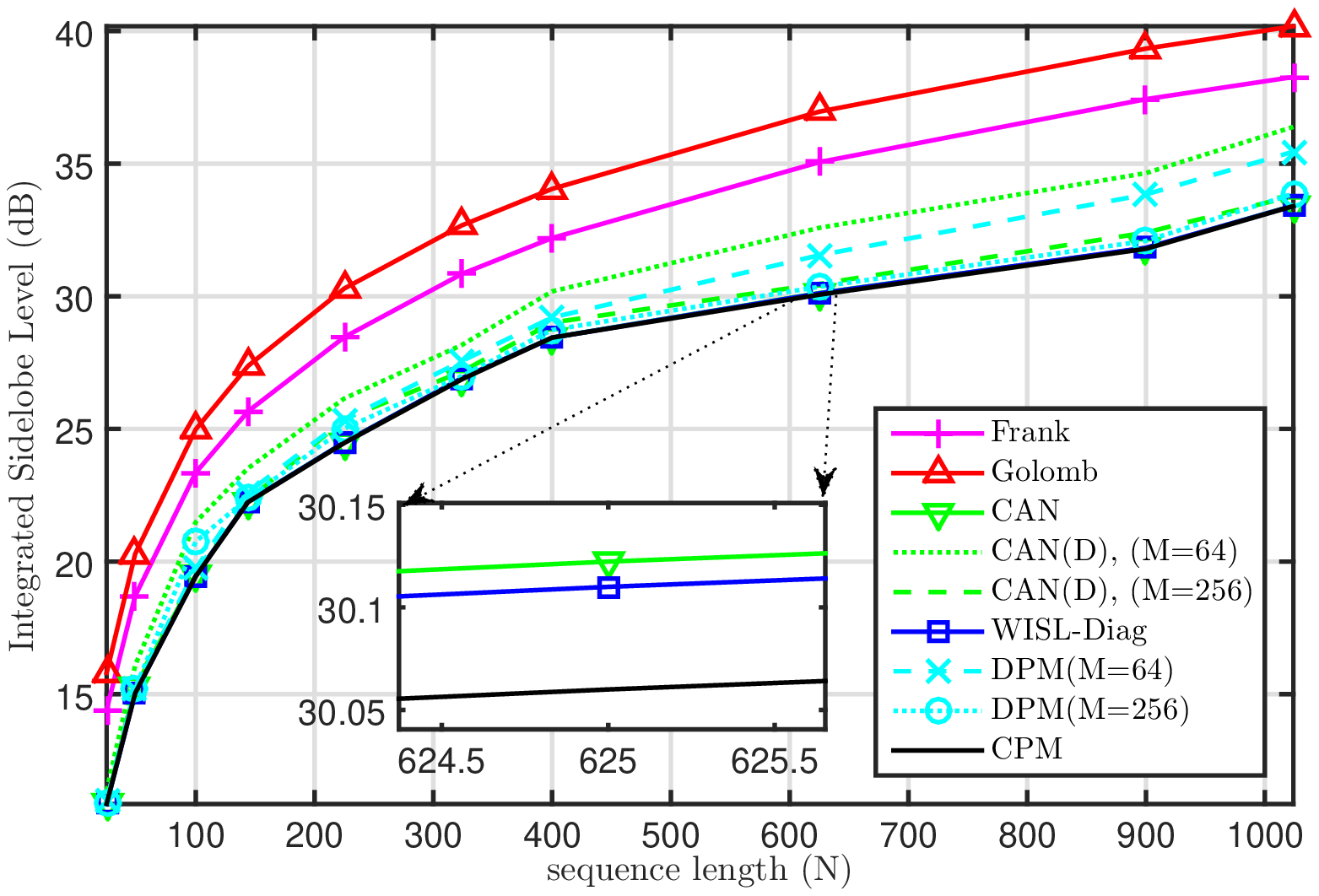}}\qquad
\caption{ISL (dB) versus sequence length.}
\label{Fig:isl2}
\end{figure}
Finally, Figs. \ref{Fig:isl3}.(a), \ref{Fig:isl3}.(b), and  \ref{Fig:isl3}.(c) display the ISL versus $N$ for $M=2$, $M=8$, and $M=16$, respectively. $5$ random sequences (drawn by a uniform distribution over the set of the feasible sequences) are considered as starting points and the best resulting code is picked up. The results clearly show the significant performance gain granted by DPM. Specifically, for binary phase codes the maximum ISL gain of DPM with respect to CAN(D) is 5.99 dB, whereas for $M=8$ and $M=16$ the gains are 1.73 dB and 1.4 dB, respectively.
\begin{figure}%
\centering
\subfigure[ISL versus $N$ for binary phase sequence designed through CAN(D) and DPM.]{\includegraphics[width = 85mm]{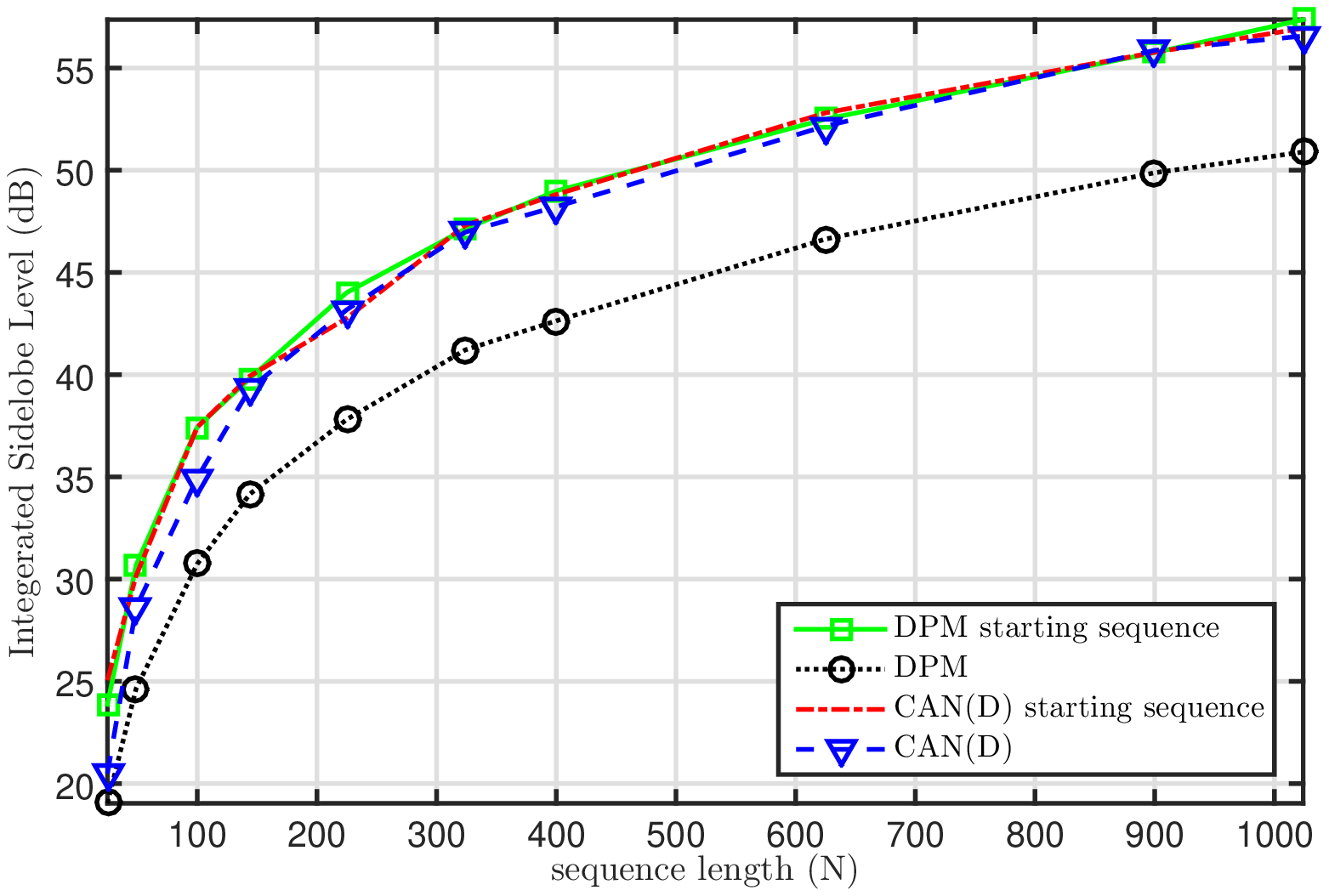}}\qquad
\subfigure[ISL versus $N$ for discrete phase sequences with $M=8$ designed through CAN(D) and DPM.]{\includegraphics[width = 85mm]{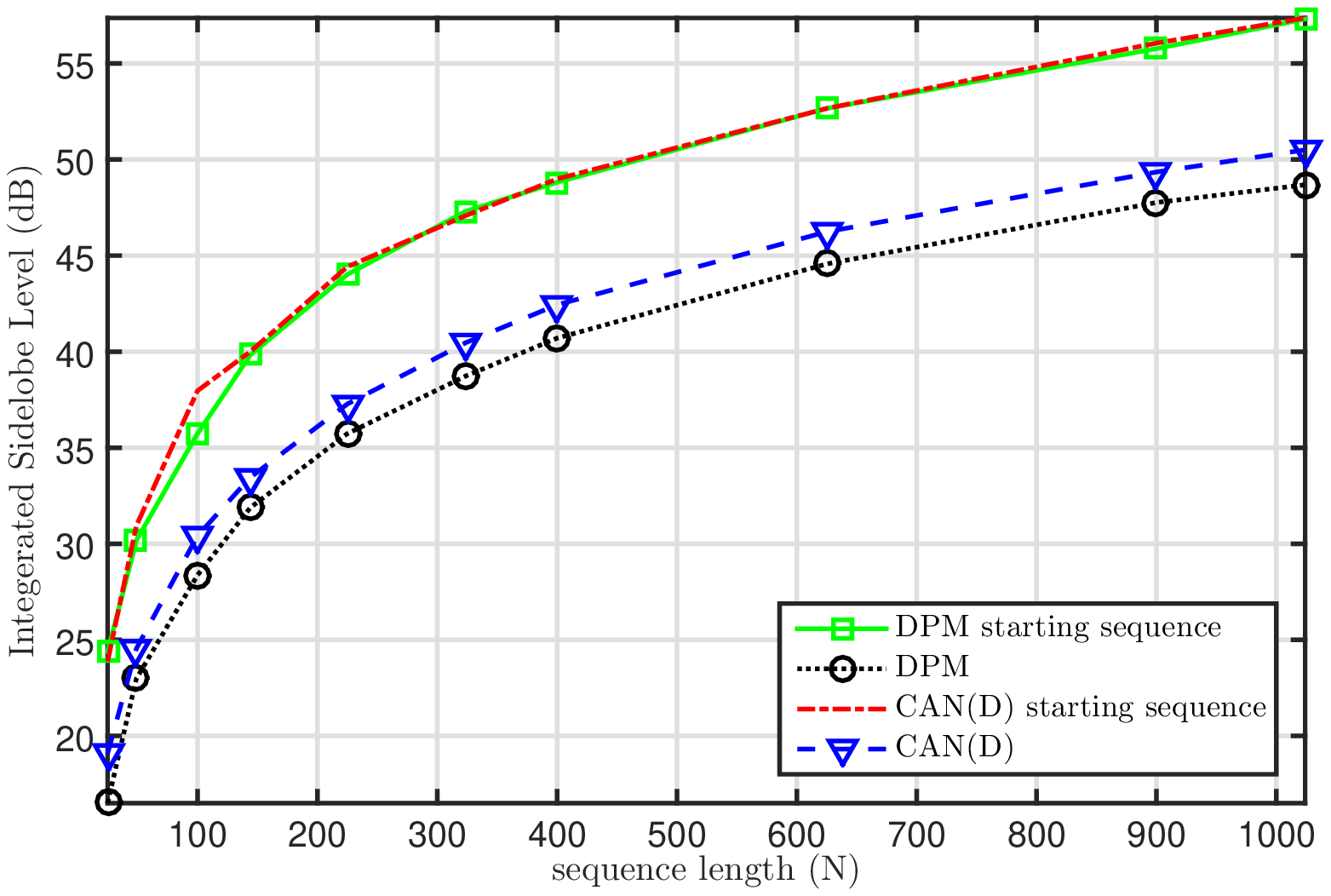}}\qquad
\subfigure[ISL versus $N$ for discrete phase sequences with $M=16$ designed through CAN(D) and DPM.]{\includegraphics[width = 85mm]{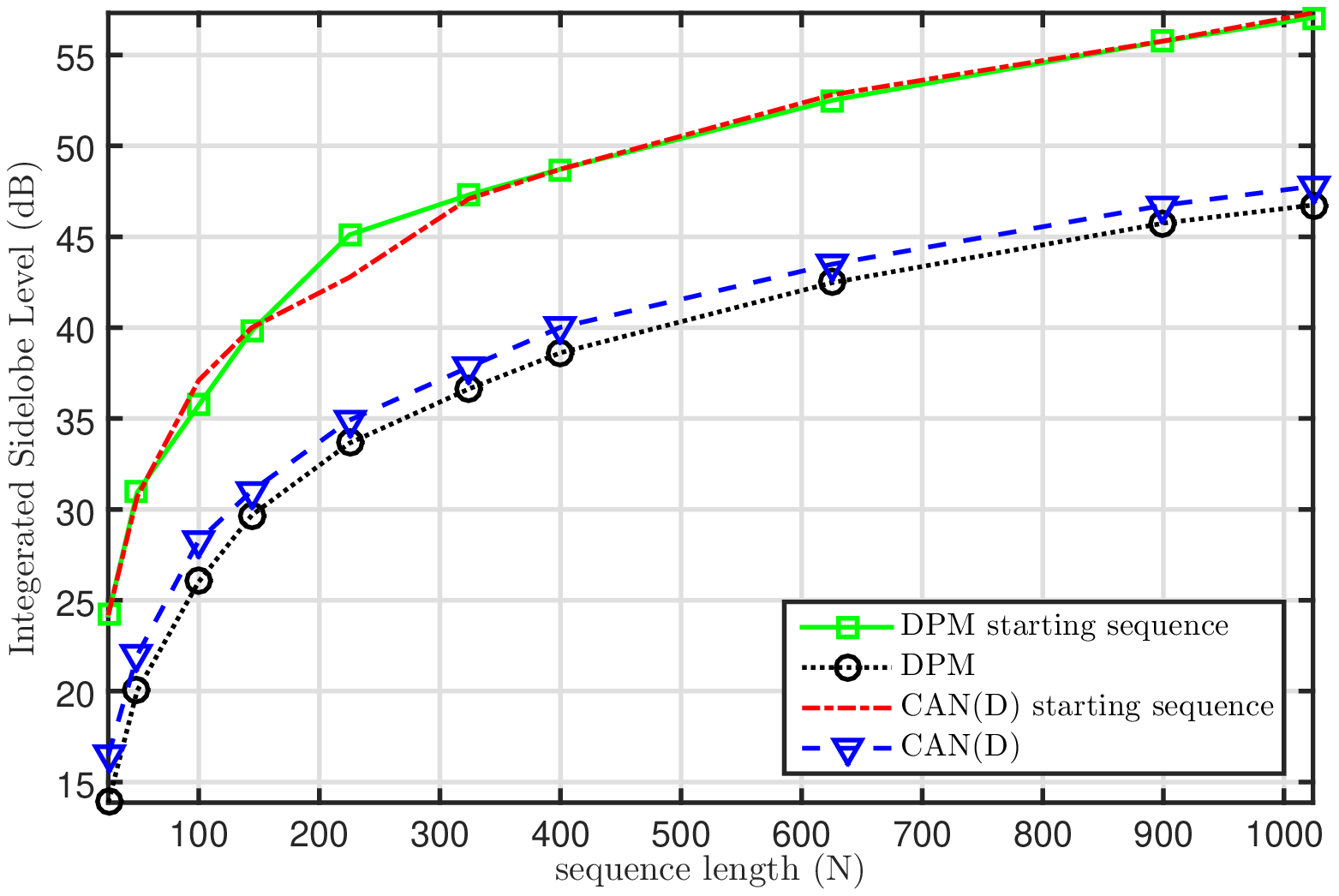}}\
\caption{ISL (dB) versus sequence length.}
\label{Fig:isl3}
\end{figure}

\subsection{Pareto-Optimal Solution}
 In this subsection, the impact of the parameter $\theta$ on the designed codes is illustrated. Precisely, in Fig. \ref{Fig:pareto1} the Pareto curves obtained via CPM and DPM, for $M=64$ and $M=256$, are shown assuming $N = 400$ and ${\theta}\in\{\theta_1,\theta_2,\ldots,\theta_{{6}}\}\subseteq [0,1]$, with $\theta_i=1-(i-1)/5$, $i=1,\ldots6$. The starting sequence used at
 ${\theta}=\theta_{i}$ is the code optimized at ${\theta}=\theta_{i-1}$; also,  at ${\theta}=\theta_{{1}}$ the heuristic approach of Subsection \ref{heuristic_approach} is used. As expected, $\theta$ trades-off ISL and PSL values. Specifically, the higher $\theta$ the better the PSL and the worst the ISL, that is a classical feature of bi-objective Pareto curves.  Otherwise stated, any solution is a Pareto optimal point.
\begin{figure}%
\centering
{\includegraphics[width = 85mm]{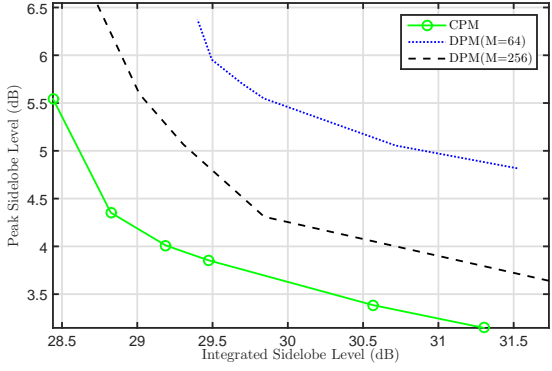}}
\caption{Pareto-optimal curves for continuous and discrete phases ($M=64$ and $M=256$) with $N = 400$.}
\label{Fig:pareto1}
\end{figure}

\section{Conclusion}\label{conclusions}
The synthesis of phase sequences exhibiting good aperiodic correlation features has been addressed. Specifically, PSL and ISL have been adopted as performance metrics and the design problem has been formulated as a bi-objective optimization where either a continuous or a discrete phase constraint is imposed at the design stage. The non-convex and, in general, NP-hard problems resulting from scalarization are handled via a novel iterative procedure based on the CD method.
Each iteration of the devised algorithm requires the solution of a non-convex min-max problem involving quartic functions. With reference to the continuous phase codes design, a new polynomial-time bisection method aimed at solving globally the aforementioned problem is developed. As to the discrete phase case, which includes the challenging and practically valuable binary codes synthesis, an FFT-based procedure is devised. Finally, some heuristic methods based $l_p$-norm minimization have been  introduced to suitably initialize the new converge-ensured algorithms.

At the analysis stage, some interesting case studies have been provided to illustrate the effectiveness of the developed CD design approach. The results highlight the ability of the new procedures to design constant modulus sequences with enhanced
autocorrelation properties. Precisely, the synthesized sequences grant better PSL and ISL than some counterparts
available in the open literature. Besides, these gains are higher and higher as the constellation size reduces.

As future research tracks, it might be interesting to account for  the behavior in the Doppler domain of the synthesized code, i.e., considering the design of codes with a proper ambiguity function, as well as consider a Peak to Average Ratio (PAR) constraint.

\begin{appendices}
\section{Proof of Lemma \ref{lemm_poly1}}\label{ap:rk2}
Let,
\begin{align}
|r_{k}(e^{\jmath\phi_d})|^2 &=  \left|a_{dk}e^{\jmath \phi_d} + {b_{dk} e^{-\jmath \phi_d}+c_{dk}} \right|^2 \nonumber \\
&=  \bigg|\left(a_{dkr}e^{\jmath\phi_d} + b_{dkr}e^{-\jmath\phi_d}+c_{dkr}\right)\nonumber \\
& ~~~~~~ + \jmath\left(a_{dki}e^{\jmath\phi_d}+b_{dki}e^{-\jmath\phi_d}+c_{dki}\right)\bigg|^2 \nonumber \\
& = \bigg( (a_{dkr}+b_{dkr}) \cos(\phi_d) \nonumber \\
& ~~~~~~~~ +(b_{dki}-a_{dki}) \sin{(\phi_d)} +c_{dkr} \bigg)^2 \nonumber \\
& ~~~ + \bigg((a_{dki}+b_{dki})\cos(\phi_d)  \nonumber \\ & ~~~~~~~~ +(a_{dkr}-b_{dkr}) \sin{(\phi_d)} +c_{dki}\bigg)^2 \nonumber \\
& = A_{dk} + B_{dk}
\end{align}
where $a_{dkr} = \Re({a_{dk}})$, $b_{dkr} = \Re({b_{dk}})$, $c_{dkr} = \Re({c_{dk}})$, $a_{dki} = \Im({a_{dk}})$, $b_{dki} = \Im({b_{dk}})$ and $c_{dki} = \Im({c_{dk}})$. Also,
 \begin{equation}
\begin{aligned}\label{eq:Adk1}
A_{dk} = &  \bigg( (a_{dkr}+b_{dkr}) \cos(\phi_d) \\
& +(b_{dki}-a_{dki}) \sin{(\phi_d)} +c_{dkr} \bigg)^2
 \end{aligned}
 \end{equation}
 \begin{equation}\label{eq:Bdk1}
 \begin{aligned}
  B_{dk}  = & \bigg((a_{dki}+b_{dki})\cos(\phi_d) \\
   & +(a_{dkr}-b_{dkr}) \sin{(\phi_d)} +c_{dki}\bigg)^2
\end{aligned}
\end{equation}
Expanding  \eqref{eq:Adk1}
\begin{equation}\label{eq_44}
\begin{aligned}
A_{dk} = &  (a_{dkr}+b_{dkr})^2\cos^2(\phi_d) \\
 & +2(a_{dkr}+b_{dkr})c_{dkr}\cos(\phi_d) \\
 & + c_{dkr}^2+(b_{dki}-a_{dki})^2\sin^2(\phi_d) \\
 & + 2(a_{dkr}+b_{dkr})(b_{dki}-a_{dki})\sin(\phi_d)\cos(\phi_d)\\
 &+2c_{dkr}(b_{dki}-a_{dki})\sin(\phi_d)
\end{aligned}
\end{equation}
Hence, according to the trigonometric relationships \cite{abramowitz1964handbook,ben2001lectures},
\begin{eqnarray}
\sin{\phi_d} && =
\frac{2 \tan\left({\frac{\phi_d}{2}}\right)}{1 + \tan^2\left({\frac{\phi_d}{2}}\right)}
\end{eqnarray}
and
\begin{eqnarray}
\cos{\phi_d} && =
 \frac{1 - 2 \tan^2\left({\frac{\phi_d}{2}}\right)}{1 + \tan^2\left({\frac{\phi_d}{2}}\right)}
\end{eqnarray}
(\ref{eq_44}) can be recast as
\begin{equation}
\begin{aligned}
 A_{dk}  = &  \frac{1}{(1+\beta_{d}^2)^2}\{ (a_{dkr}+b_{dkr})^2(1-\beta_{d}^2)^2 \\
 & +2(a_{dkr}+b_{dkr})c_{dkr}(1-\beta_{d}^4) \\
 & + c_{dkr}^2(1+\beta_{d}^2)^2+4\beta_{d}^2(b_{dki}-a_{dki})^2 \\
 & + 4\beta_{d}(1-\beta_{d}^2)(a_{dkr}+b_{dkr})(b_{dki}-a_{dki})\\
 &+4\beta_{d} c_{dkr}(b_{dki}-a_{dki})\\
 &+4\beta_{d}^3 c_{dkr}(b_{dki}-a_{dki}) \}
  \end{aligned}
\end{equation}
with  $\beta_{d} = \tan\left({\frac {\phi_d}{2}}\right)$.
Besides, using standard algebra it is not difficult to show that,
\begin{equation}\label{Adk}
A_{dk} = \frac{\mu'_{dk} \beta_{d}^4  + \kappa'_{dk} \beta_{d}^3 + \xi'_{dk} \beta_{d}^2 + \eta'_{dk} \beta_{d} + \rho'_{dk}}{(1+\beta_{d}^2)^2} \end{equation}
with
\begin{equation*}
\begin{aligned}
\mu'_{dk} = & (a_{dkr}+b_{dkr})^2 - 2c_{dkr}(a_{dkr}+b_{dkr})+c_{dkr}^2\\
\kappa'_{dk} = & -4(a_{dkr}+b_{dkr})(b_{dki}-a_{dki})+4c_{dkr}(b_{dki}-a_{dki})\\
\xi'_{dk} = & -2(a_{dkr}+b_{dkr})^2 + 2c_{dkr}^2+4(b_{dki}-a_{dki})^2\\
\eta'_{dk} = & 4(a_{dkr}+b_{dkr})(b_{dki}-a_{dki})+4c_{dkr}(b_{dki}-a_{dki})\\
\rho'_{dk} = & (a_{dkr}+b_{dkr})^2+2(a_{dkr}+b_{dkr})c_{dkr}+c_{dkr}^2
\end{aligned}
\end{equation*}
A similar procedure on $B_{dk}$ yields,
\begin{equation}\label{Bdk}
B_{dk} = \frac{ \mu''_{dk} \beta_{d}^4  + \kappa''_{dk} \beta_{d}^3 + \xi''_{dk} \beta_{d}^2 + \eta''_{dk} \beta_{d} + \rho''_{dk}}{(1+\beta_{d}^2)^2}
\end{equation}
where
\begin{equation*}
\begin{aligned}
\mu''_{dk} = & (a_{dki}+b_{dki})^2 - 2c_{dki}(a_{dki}+b_{dki})+c_{dki}^2 \\
\kappa''_{dk} = & -4(a_{dki}+b_{dki})(a_{dkr}-b_{dkr})+4c_{dki}(a_{dkr}-b_{dkr})\\
\xi''_{dk} = & -2(a_{dki}+b_{dki})^2 + 2c_{dki}^2+4(a_{dkr}-b_{dkr})^2\\
\eta''_{dk} = & 4(a_{dki}+b_{dki})(a_{dkr}-b_{dkr})+4c_{dki}(a_{dkr}-b_{dkr})\\
\rho''_{dk} = & (a_{dki}+b_{dki})^2+2(a_{dki}+b_{dki})c_{dki}+c_{dki}^2
\end{aligned}
\end{equation*}
Finally,
\begin{equation}
|\widetilde{r}_{k}(\beta_d)|^2 = \frac{\mu_{dk} \beta_{d}^4 + \kappa_{dk} \beta_{d}^3 + \xi_{dk} \beta_{d}^2 + \eta_{dk} \beta_{d} + \rho_{dk}}{(1+\beta_{d}^2)^2}
\end{equation}
where $\mu_{dk}=\mu'_{dk}+\mu''_{dk}$, $\kappa_{dk}=\kappa'_{dk}+\kappa''_{dk}$, $\xi_{dk} = \xi'_{dk}+\xi''_{dk}$, $\eta_{dk}=\eta'_{dk}+\eta''_{dk}$ and $\rho_{dk} = \rho'_{dk}+\rho''_{dk}$.
\section{Derivation of the feasibility set}\label{ap:feas}
Let
\begin{equation}
 \bar{p}(x) =  a x^4 + b x^3 + c x^2 + d x + e,
 \end{equation}
with $x, a, b, c, d, e \in \mathbb{R}$ as well as $[a, b, c, d, e]^T\neq {\bf 0}$,  and let $ \bar{p}^{'}(x)$, $ \bar{p}^{''}(x)$, $ \bar{p}^{'''}(x)$,  $ \bar{p}^{(4)}(x)$ be the first order, second order, third order, and fourth order derivatives of  $\bar{p}(x)$, respectively. Moreover, denote by $L\leq 4$ the number of distinct real roots of $\bar{p}(x)$ and let $x_i$, $i=1,\ldots,L$, be the ordered real roots.  Since $\bar{p}^{(4)}(x)$ is a continuous function, the following steps allows to construct the set\footnote{Notice that, $x_{0} = -\infty$ and $x_{L+1} = + \infty$.} $\overline{\cal{A}}=\{x\,:\, \bar{p}(x)> 0\}$:
\begin{itemize}
\item[{[1]}]{Let $\overline{\cal{A}}_k =  \varnothing$};
\item[{[2]}]{Find the real roots of $\bar{p}(x)$:}
\begin{itemize}
\item If $L=0$, then if the constant term $e\leq 0$ exit. Conversely, update $\overline{\cal{A}}_k=\mathbb{R}$ and exit;
\item If $L\geq 1$, sort the real roots, set $i=1$, and perform the remaining steps;
 \end{itemize}
\item[{[3]}]{If $ \bar{p}^{'}(x_i) > 0$, then $\bar{p}(x)>0$ on the interval ${(x_i,x_{i+1})}$: update $\overline{\cal{A}}_k = \overline{\cal{A}}_k \cup (x_{i}, x_{i+1})$; if $i < L $ set $i=i+1$ and repeat step [3], otherwise exit;}
\item[{[4]}]{If $ \bar{p}^{'}(x_i) < 0$, then $\bar{p}(x_i)>0$ on the interval ${(x_{i-1},x_{i})}$:  update $\overline{\cal{A}}_k = \overline{\cal{A}}_k \cup (x_{i-1}, x_{i})$; if $i < L$ set $i=i+1$ and repeat step [3], otherwise exit;}
\item[{[5]}]{If $ \bar{p}^{'}(x_i) = 0$, then $x_i$ is a stationary point:}
\begin{itemize}
\item{If $ \bar{p}^{''}(x_i) > 0 $, then  $x_i$ is a local minimum: update $\overline{\cal{A}}_k = \overline{\cal{A}}_k \cup (x_{i-1},x_{i}) \cup (x_{i},x_{i+1})$; if $i < L$ set $i=i+1$ and repeat step [3], otherwise exit;}
\item{If $ \bar{p}^{''}(x_i) < 0 $, then $x_i$ is a local maximum: if $i < L $ set $i=i+1$ and repeat step [3], otherwise exit;}
\item{If $ \bar{p}^{''}(x_i) = 0 $:}
\begin{itemize}
\item{If $ \bar{p}^{'''}(x_i) \neq 0$, then $x_i$  is an inflection point. If $ \bar{p}^{'''}(x_i) > 0$:  update $\overline{\cal{A}}_k = \overline{\cal{A}}_k \cup (x_{i}, x_{i+1})$; if $i < L$ set $i=i+1$ and repeat step [3], otherwise exit. Conversely if $ \bar{p}^{'''}(x_i) < 0$}: update $\overline{\cal{A}}_k = \overline{\cal{A}}_k \cup (x_{i-1}, x_{i})$; if $i < L$ set $i=i+1$ and repeat step [3], otherwise exit;
\item{If $ \bar{p}^{'''}(x_i) = 0 $,}
 \begin{itemize}
 \item{If $ \bar{p}^{(4)}(x_i) > 0$, then  $x_i$ is a local minimum: update $\overline{\cal{A}}_k = \overline{\cal{A}}_k \cup (x_{i-1},x_{i}) \cup (x_{i},x_{i+1})$; if $i<L$ set $i=i+1$ and repeat step [3], otherwise exit. Conversely, if $ \bar{p}^{(4)}(x_i) < 0 $ then $x_i$ is a local maximum: if $i < L $ set $i=i+1$ and repeat step [3], otherwise exit.}
     \end{itemize}
    \end{itemize}
    \end{itemize}
\end{itemize}
In order to calculate the union of the different sets, the fast and simple ``union-find'' algorithm \cite{seidel2005top} is employed. Precisely, let $(l_1,u_2), (l_2,u_2), \ldots, (l_{M},u_{M})$ be $M$ different intervals where $l_i$ is the lower bound of each set  and $u_i$ is the upper bound. Let $\bt \in {\cal{R}}^{M}$ be the vector containing the sorted $l_i$ and $u_i$, $i = 1, \ldots, M$ in increasing order (if $l_i=u_k$ then $u_k$ is located first than $l_i$). Now, define a counter $Count$  initialized as $Count=1$; then check if the second entry of $\bt$ is a left extreme or a right extreme of one of $M$ intervals. If it is a left extreme, $Count=Count+1$ otherwise $Count=Count-1$. Now, if $Count=0$ an interval disjoint from the remaining part of the set is obtained and the process continues for the construction of the remaining part of the union set starting from the successive entry of $\bt$.


\section{Proof of Lemma  \ref{lem_dpm}}\label{ap:dpm}
The $M$-point DFT of $\bzeta_{dk}$ is,
\begin{equation*}
  {\cal{F}}_{M}(\bzeta_{dk})
                        = \left[
                          \begin{array}{c}
                            a_{dk}+c_{dk}+b_{dk} \\
                            a_{dk}+c_{dk}e^{-\jmath \frac{2 \pi}{M} }+b_{dk}e^{-\jmath \frac{4 \pi}{M} } \\
                            \vdots \\
                            a_{dk}+c_{dk}e^{-\jmath \frac{2 \pi(M-1)}{M} }+b_{dk}e^{-\jmath \frac{4 \pi(M-1)}{M} }\\
                          \end{array}
                        \right]
\end{equation*}
Next, observe that
\begin{equation} \label{eq:rkproof2}
{\widetilde{r}_{k}}(\bar{\phi_m}) e^{-\jmath \bar{\phi_m}} = a_{dk} +c_{dk}e^{-\jmath \bar{\phi_m}}+ b_{dk}e^{-2 \jmath\bar{\phi_m}}, ~~~ m = 1 , \ldots, M
\end{equation}
Since $|{\widetilde{r}_{k}}(\bar{\phi_m})= e^{-\jmath \bar{\phi_m}}| = |{\widetilde{r}_{k}}(\bar{\phi_m})|$,
 \begin{equation}
 \left|{\cal{F}}_{M}(\bzeta_{dk})\right| = \left[|\widetilde{r}_{k}(\bar{\phi}_1)|,| \widetilde{r}_{k}(\bar{\phi}_2)|, \ldots,| \widetilde{r}_{k}(\bar{\phi}_{M})|\right]^T.
 \end{equation}

\section{Proof of Lemma \ref{lemm_poly2}}\label{ap:rk1}
Let
\begin{equation}
\begin{aligned}
\Re\left\{r_k^*\frac{r_k^{(n)}}{\left|r_k^{(n)}\right|}\right\} = & \Re\left\{  \left[a_{dk}^*e^{-\jmath \phi_d} + {b_{dk}^* e^{\jmath \phi_d}+c_{dk}^*}\right] \frac{r_k^{(n)}}{\left|r_k^{(n)}\right|}\right\}  \\
 = &   \Re\left\{ \widetilde{a}_{dk}e^{-\jmath \phi_d} + {\widetilde{b}_{dk} e^{\jmath \phi_d}+\widetilde{c}_{dk}} \right\} \\
= &   (\widetilde{a}_{dkr}+\widetilde{b}_{dkr}) \cos(\phi_d) \\
& + (\widetilde{a}_{dki}-\widetilde{b}_{dki}) \sin{(\phi_d)} +\widetilde{c}_{dkr}
\end{aligned}
\end{equation}
where $\widetilde{a}_{dkr} = \Re({a_{dk}^{*}\frac{r_k^{(n)}}{\left|r_k^{(n)}\right|}})$, $\widetilde{b}_{dkr} = \Re({b_{dk}^{*}\frac{r_k^{(n)}}{\left|r_k^{(n)}\right|}})$, $\widetilde{c}_{dkr} = \Re({c_{dk}^{*}\frac{r_k^{(n)}}{\left|r_k^{(n)}\right|}})$, $\widetilde{a}_{dki} = \Im({a_{dk}^{*}\frac{r_k^{(n)}}{\left|r_k^{(n)}\right|}})$ and $\widetilde{b}_{dki} = \Im({b_{dk}^{*}\frac{r_k^{(n)}}{\left|r_k^{(n)}\right|}})$.\\
As in Appendix \ref{ap:rk2}, using the change of variable $\cos{(\phi_d)} = \frac{1-\beta_{d}^2}{1+\beta_{d}^2}$ and $\sin{(\phi_d)} = \frac{2\beta_{d}}{1+\beta_{d}^2}$,
\begin{equation}
\begin{aligned}
\Re\left\{r_k^*\frac{r_k^{(n)}}{\left|r_k^{(n)}\right|}\right\} = &  \frac{1}{(1+\beta_{d}^2)^2}\{ (\widetilde{a}_{dkr}+\widetilde{b}_{dkr})(1-\beta_{d}^4) \\
 & +(\widetilde{a}_{dki}-\widetilde{b}_{dki})(2\beta_{d})(1+\beta_d^2) \\
 & \widetilde{c}_{dkr}(1+\beta_{d}^2)^2. \}
  \end{aligned}
\end{equation}
Finally, defining $ \widetilde{\mu}_{dk} = \widetilde{c}_{dkr}- \widetilde{a}_{dkr}- \widetilde{b}_{dkr}$, $ \widetilde{\kappa}_{dk} = 2 (\widetilde{a}_{dki}-\widetilde{b}_{dki})$, $\widetilde{\xi}_{dk} = 2  \widetilde{c}_{dkr}$, $\widetilde{\eta}_{dk} =2 (\widetilde{a}_{dki}-\widetilde{b}_{dki})$ and
$ \widetilde{\rho}_{dk} = \widetilde{a}_{dkr} + \widetilde{b}_{dkr} +  \widetilde{c}_{dkr}$
yields

\begin{equation*}
\Re\left\{r_k^*\frac{r_k^{(n)}}{\left|r_k^{(n)}\right|}\right\}   =  \frac{ \widetilde{\mu}_{dk} \beta_{d}^4 + \widetilde{\kappa}_{dk} \beta_{d}^3 + \widetilde{\xi}_{dk} \beta_{d}^2 + \widetilde{\eta}_{dk} \beta_{d} + \widetilde{\rho}_{dk}}{(1+\beta_{d}^2)^2}.
\end{equation*}

\end{appendices}
\bibliographystyle{ieeetr}
\bibliography{ref1.bib}
\end{document}